\documentclass[a4paper,UKenglish,cleveref, autoref, thm-restate,
authorcolumns]{lipics-v2021}
\usepackage{arxiv-lipics-v2021}

\hideLIPIcs  %
\nolinenumbers
\usepackage{mathtools}
\usepackage{tikz}
\usetikzlibrary{automata,snakes,positioning,arrows.meta,arrows,
decorations.pathreplacing,calligraphy,petri,shapes.geometric}
\usepackage[bibliography=common,forwardlinking=yes]{apxproof}
\usepackage{marginnote}

\makeatletter
\newcommand{\latestappendixlabel}{}
\newcommand{\appendixlabel}[1]{
  \renewcommand\latestappendixlabel{#1}%
}
\makeatother

\title{Representing One Letter Weighted Automata over the Tropical Semiring} %

\author{Shaull Almagor}{Technion, Haifa, Israel}{shaull@technion.ac.il}{https://orcid.org/0000-0001-9021-1175}{Supported by the ISRAEL SCIENCE FOUNDATION (grant No. 989/22)}

\author{Ismaël Jecker}{Université Marie et Louis Pasteur, 
CNRS, institut FEMTO-ST, F-25000 Besançon, 
France}{ismael.jecker@gmail.com}{https://orcid.org/0000-0002-6527-4470}{Supported by the Agence Nationale de la Recherche (ANR) grant ANR-25-CE48-2447 FAVOR.}

\author{Filip Mazowiecki}{University of Warsaw, Poland}{f.mazowiecki@mimuw.edu.pl}{https://orcid.org/0000-0002-4535-6508}{Supported by the Polish National Science Centre SONATA BIS-12 grant number 2022/46/E/ST6/00230.}%

\author{Łukasz Orlikowski}{University of Warsaw, Poland}{l.orlikowski@uw.edu.pl}{https://orcid.org/0009-0001-4727-2068}{Supported by the ERC grant INFSYS, agreement no. 950398.}

\author{David Purser}{University of Liverpool, UK}{D.Purser@liverpool.ac.uk}{https://orcid.org/0000-0003-0394-1634}{}

\author{Henry Sinclair-Banks}{University of Warsaw, Poland \and Max Planck Institute for Software Systems (MPI-SWS), Kaiserslautern, Germany \and \url{http://henry.sinclair-banks.com}}{hsb@mpi-sws.org}{https://orcid.org/0000-0003-1653-4069}{Supported by the ERC grant INFSYS, agreement no. 950398.}%

\authorrunning{S. Almagor, I. Jecker, F. Mazowiecki, \L{}. Orlikowski, D. Purser, and H. Sinclair-Banks} %

\Copyright{Shaull Almagor, Ismaël Jecker, Filip Mazowiecki, Łukasz Orlikowski, David Purser, and Henry Sinclair-Banks} %

\ccsdesc[500]{Theory of computation~Quantitative automata} 

\keywords{weighted automata, determinisation, register minimisation} %

\category{} %

\relatedversion{Published version \url{https://doi.org/10.4230/LIPIcs.CONCUR.2026.10}} %

\acknowledgements{We acknowledge the \href{https://samsa2025.github.io/rock}{Weighted Automata Rock workshop} funded by IDUB Thematic Research Programmes (POB III) of the University of Warsaw.}%

\EventEditors{Ana Sokolova and Patrick Totzke}
\EventNoEds{2}
\EventLongTitle{37th International Conference on Concurrency Theory (CONCUR 2026)}
\EventShortTitle{CONCUR 2026}
\EventAcronym{CONCUR}
\EventYear{2026}
\EventDate{September 1--4, 2026}
\EventLocation{Liverpool, UK}
\EventLogo{}
\SeriesVolume{391}
\ArticleNo{45}

\usepackage{todonotes}

\newcommand{\bbN}{\mathbb{N}}
\newcommand{\NN}{\bbN}
\newcommand{\bbZ}{\mathbb{Z}}
\newcommand{\ZZ}{\bbZ}
\newcommand{\bbQ}{\mathbb{Q}}

\newcommand{\bbNinf}{\mathbb{N}_{\infty}}
\newcommand{\bbZinf}{\mathbb{Z}_{\infty}}
\newcommand{\set}[1]{\left\{#1\right\}}
\newcommand{\sset}{\subseteq}
\newcommand{\tup}[1]{\langle #1 \rangle}

\newcommand{\abs}[1]{\lvert#1\rvert}

\newcommand{\norm}[1]{\|#1\|}

\newcommand{\cA}{\mathcal{A}}
\newcommand{\cB}{\mathcal{B}}
\newcommand{\cC}{\mathcal{C}}
\newcommand{\cD}{\mathcal{D}}

\newcommand{\cN}{\mathcal{N}}

\newcommand{\init}{\texttt{init}}
\newcommand{\fin}{\texttt{fin}}
\newcommand{\length}{\mathsf{len}}
\newcommand{\minweight}{\mathsf{mwt}}

\newcommand{\runsto}[1]{\xrightarrow{#1}}

\newcommand{\upd}{\texttt{upd}}
\newcommand{\weight}{\mathsf{wt}}
\newcommand{\opt}{\mathsf{opt}}

\newcommand{\dom}{\textsf{dom}}
\newcommand{\class}[1]{\ifmmode\mathsf{#1}\else\textsf{#1}\fi}
\usepackage{xspace}
\newcommand{\coNP}{\class{coNP}\xspace}
\newcommand{\NP}{\class{NP}\xspace}

\newcommand{\wone}{$\class{W_{1}}$}
\newcommand{\cowone}{$\class{coW_{1}}$}

\tikzstyle{state} = [circle, draw, black, line width = 0.3mm, inner sep = 1mm]
\tikzstyle{transition} = [-{Stealth[width=1.5mm, length=2mm]}, black, line width = 0.3mm]

\newtheoremrep{prop}[proposition]{Proposition}
\newtheoremrep{lem}[lemma]{Lemma}
\newtheoremrep{theo}[theorem]{Theorem}
\newtheoremrep{coro}[corollary]{Corollary}
\theoremstyle{claimstyle}
\newtheoremrep{cla}[claim]{Claim} 
\begin{document}

\maketitle

\begin{abstract}
We consider weighted automata over the tropical semiring $\bbZinf(\min, +)$. Recently, it was shown that determinisation is decidable; in this paper we focus on the complexity when the alphabet is unary. In 2001, Lombardy showed this problem is decidable, a close inspection of his proof yields a \coNP upper bound on the complexity. Earlier Gaubert showed that every weighted automaton in this setting can be effectively turned into an equivalent union of deterministic weighted automata. We prove Gaubert's result efficiently, presenting it as a generalisation of Chrobak's normal form for unary NFA. In particular, we prove that the equivalent union of deterministic weighted automata can be represented by a weighted automaton of quadratic size in the size of the original one, and this representation can be computed in polynomial time. Building on this, we show that determinisation, and even register minimisation (which generalises determinisation), is \coNP-complete. We complete the paper with observations that the boundedness problem is also \coNP-complete by reductions with determinisation.
Lastly, we provide evidence that all of these problems are not FPT (by proving \cowone-hardness) when parametrised by the number of deterministic automata in the union.
\end{abstract}

\section{Introduction}
\label{sec:introduction}

Weighted automata over the tropical semiring $(\bbZ\cup\{\infty\}, \min, +)$ are a quantitative extension of finite automata by labelling transitions with integers. 
Instead of recognising input words, they assign them integer weights. 
To every run we assign the sum of the weights on its transition, and the output for a given input word aggregates the weigths of accepting runs using the minimum function. Weighted automata over the tropical semiring play a central role in verification, see for example the surveys~\cite{LombardyS06,Daviaud20}. 
For example, this model played a crucial role in Hashiguchi's results proving decidability of the star height problem~\cite{Hashiguchi82,Hashiguchi82b}.

The power of weighted automata over the tropical semiring makes the model convenient for expressing many properties, unfortunately at the cost that most problems are undecidable~\cite{AlmagorBK22}. 
In contrast to Boolean finite automata, weighted automata cannot be always determinised. 
Furthermore, decision problems for subclasses, like deterministic or unambiguous weighted automata, become significantly less complex and, in particular, decidable~\cite{LombardyS06,Daviaud20}. 
This naturally raises the question whether an input automaton admits an equivalent deterministic weighted automaton; as a decision problem it is called the determinisation problem. A recent breakthrough established decidability of determinisation for tropical weighted automata~\cite{almagor2026determinization}, after decades from its formulation in the 1990s by Mohri~\cite{mohri1994compact,mohri1997finite} (and already implicit in earlier work~\cite{choffrut1977caracterisation}).

In this paper, we focus on weighted automata over the tropical semiring when the alphabet is unary.
In this setting decidability of determinisation was established by Lombardy~\cite{lombardy2001sequentialization}. While not stated explicitly, a $\coNP$  complexity bound can be inferred from~\cite{lombardy2001sequentialization}, matching our upper bound.
Indeed, Lombardy essentially reduces the problem to the setting when the input automaton is assumed to be unambiguous. 
Then, non-determinisability can be witnessed by word demonstrating a cycle in a certain automaton.
More recently, determinisation for unambiguous automata over the unary alphabet was shown to be decidable in linear time~\cite[Theorem 1.4]{abs-2501-14725}.

\subparagraph*{Our contributions.}
We revisit the unary alphabet setting with the aim of obtaining a precise structural and complexity-theoretic understanding of determinisation and related problems. Our approach is based on a refined analysis of runs using cycles of minimal average weight. Intuitively, any sufficiently long run of minimal weight must predominantly iterate such cycles. We show that this intuition can be turned into an effective normal form that captures all behaviours in a compact and algorithmically tractable way.
Gaubert~\cite{gaubert2005rational} shows that every unary weighted automaton can be transformed into an equivalent finite union of deterministic weighted automata. 
We improve this reduction showing that, with respect to the size of the original automaton, (i) there are linearly many automata in the union, (ii) each automaton in the union has quadratic size, and (iii) the transformation can be computed in polynomial time. 
This result is a weighted generalisation of Chrobak’s normal form for unary (Boolean) automata~\cite{Chrobak86,To09}, and yields a concise ``lasso-shape'' representation governed by cycles of minimal average weight.

We also consider the register minimisation problem, which is known to strictly generalise determinisation~\cite{alur2013regular,AlurR13}. Using our Chrobak-like decomposition, we obtain a \coNP decision procedure both for register minimisation and determinisation problems. The proof is an analysis of the interactions between the deterministic components and identifying cycles that are asymptotically redundant. 
Note that over unrestricted alphabets, the register minimisation problem has recently been shown undecidable~\cite{almagor2025unambiguisability}, in contrast to the aforementioned decidability of determinisation~\cite{almagor2026determinization}.

Another important problem is the boundedness problem, studied in Hashiguchi's work on star-height~\cite{Hashiguchi82}. We show that determinisability coincides with boundedness. This link enables us to transfer techniques and results between the two settings. 
In particular, we obtain a \coNP upper bound for boundedness.

We complete these results with lower bounds proving tight complexity bounds. We show that determinisation, register minimisation, and boundedness are all \coNP-complete for unary weighted automata.
Given our Chrobak-like result, it is natural to ask for FPT algorithms, when the input is explicitly a finite union of deterministic weighted automata. 
We prove that all three problems are \cowone-hard when parameterised by the number of deterministic components in the decomposition, relying on Fernau, Hoffmann, and Wehar's work on the unary DFA intersection problem~\cite{intersection_w1,Wehar16}. 
This indicates that determinisation, register minimisation, and boundedness remain intractable even under strong structural restrictions.

\subparagraph*{Related work.} Weighted automata are also studied over fields, in particular over the rationals ($\bbQ,+,\times$)~\cite{Schutzenberger61b}, encompassing probabilistic automata as a special case~\cite{GimbertO10}. The determinisation and register minimisation problems have been shown to be decidable in the rational semiring~\cite{BellS23}, with the exact complexity still unresolved~\cite{BenaliouaLR24,JeckerMP24} and polynomial time over the unary alphabet~\cite{Kostolanyi22}.

In general, one can also consider the ambiguity hierarchy, which is a fine-grained analysis of classes between deterministic and unrestricted non-deterministic automata. 
Most notable partial results on determinisability are for: unambiguous~\cite{mohri1997finite}, finitely ambiguous~\cite{klimann2004deciding}, and polynomially ambiguous automata~\cite{kirsten2009deciding}. 
One can also generalise the determinisation problem by asking whether the input automaton admits an alternative representation that is unambiguous, finite-ambiguous, or polynomial-ambiguous~\cite{almagor2025unambiguisability,abs-2410-03444}. However, these problems trivialise in the case of unary alphabet over the tropical semiring. 
Recall that Lombardy showed that all unary weighted automata can be converted into unambiguous weighted automaton~\cite{lombardy2001sequentialization}.

Taking the algebraic view of automata, a unary weighted automaton can be viewed as a matrix $A\in \bbZ^{d\times d}$ over the tropical semiring, where the \emph{orbit} $\{A^n\mid n\in \bbN\}$ captures the behaviour of the automaton. Under this view, several works~\cite{sergeev2012csr,nachtigall1997powers,hartmann1999transience} have identified that the mean-value of ``cycles'' can be used to effectively reason about the orbit. Our work also contains several such observations, tailored to our setting. However, we stay in the view of state machines.

\section{Preliminaries}
\label{sec:preliminaries}

For $k\in \bbN=\{0,1,\ldots\}$, we denote $[k] \coloneqq\{1,\ldots,k\}$. 
We denote by $\bbNinf$ and $\bbZinf$ the sets $\bbN\cup \{\infty\}$ and $\bbZ\cup \{\infty\}$, respectively. We extend the addition and $\min$ operations to $\infty$ in the natural way: $a+\infty=\infty$ and $\min\{a,\infty\}=a$, for all $a\in \bbZinf$. By $\arg\min\{f(x) : x\in A\}$ we mean the set of elements in $A$ for which $f(x)$ is minimal for some function $f$ and set $A$.

\subparagraph*{Weighted automata.}
A \emph{Unary $(\min,+)$ Weighted Finite Automaton} (unary WFA for short) is a tuple $\cA= \tup{Q, Q_0, \Delta, F}$, where
    $Q$ is a finite set of \emph{states},
    $Q_0\subseteq Q$ are the \emph{initial states},
    $\Delta\subseteq Q\times \bbZinf\times Q$ is a \emph{transition relation} such that for every $p,q\in Q$ there exists exactly\footnote{This is without loss of generality: if there are two transitions with different weights, the higher weight can always be ignored in the $(\min,+)$ semantics. Missing transitions can be introduced with weight $\infty$.} one weight $c\in \bbZinf$ such that $(p,c,q)\in \Delta$, and
    $F\subseteq Q$ is a set of \emph{accepting states}.
If $|Q_0|=1$ and, for every $p\in Q$, there exists at most one transition $(p,c,q)$ with $c\neq \infty$, then $\cA$ is called \emph{deterministic}. 
We denote by $\norm{\cA}$ the maximal absolute value of any finite weight in $\Delta$.

\begin{remark}[Initial and final weights and states]
    \label{rmk: initial and final weights}
    Weighted automata are often defined with initial and final weights, i.e., 
    $Q_0$ and $F$ are replaced by initial and final vectors $\init,\fin\in \bbZinf^Q$. The weight of a run then also includes the initial weight and final weight. Our choice is with only a small loss of generality, the expressivity of the two models differ only on the empty word, which has no effect on our results.
\end{remark}
\begin{remark}
    Throughout this paper we use the $(\min,+)$ semiring. Since we work over $\bbZ$, our results also apply to $(\max,+)$ by symmetry.
\end{remark}

\subparagraph*{Runs.}
A \emph{run} of $\cA$ is a sequence of transitions $\rho=(t_1,t_2,\ldots,t_m)$ where $t_i=(p_i,c_i,q_i)$ such that $q_i=p_{i+1}$ for every $1\le i<m$, and $c_i <\infty$ for every $1\le i \le m$; we write $\rho:p_1\runsto{m}q_m$ to denote this run.
We say that $\rho$ is \emph{of length $m$ from $p_1$ to $q_m$}; we also use the notation $\length(\rho) \coloneqq m$ to denote the length of a run. 
For an infix $w[i,j]$, we denote the corresponding infix of $\rho$ by $\rho[i,j] \coloneqq (t_i,\ldots,t_j)$.
The \emph{weight} of the run $\rho$ is $\weight(\rho)=\sum_{i=1}^m c_i$. A run is \emph{accepting} if $p_1\in Q_0$ and $q_m\in F$.

For $n\in \bbN$, we abuse the name of the unary WFA as the function it describes, and denote by ${\cA}(n)$ the weight assigned by $\cA$ to $n$, which is the minimal weight of an accepting run of $\cA$ of length $n$. For convenience, we introduce some auxiliary notations.
For $n\in \bbN$ and sets of states $Q_1,Q_2\subseteq Q$, we define (with the convention $\min \emptyset=\infty$)
\[\minweight_\cA(Q_1\runsto{n} Q_2) \coloneqq \min\{\weight(\rho)\mid \exists q_1\in Q_1,q_2\in Q_2,\ \rho:q_1\runsto{n}q_2\}.\]
If $Q_1$ or $Q_2$ are singleton, then we denote them by a single state (e.g., $\minweight_\cA(P\runsto{n} q)$ for some set $P\subseteq Q$ and state $q \in Q$).
Accordingly, we can define
$\cA(n)=\minweight_\cA(Q_0\runsto{n} F)$.
If there are no accepting runs on $n$, then $\cA(n)=\infty$.
The unary WFA then defines the function $\cA:\bbN\to \bbZinf$. We say the domain of $\cA$, $\dom(\cA)$, is the set $\{n\in\mathbb{N} \mid \cA(n) < \infty\}$

We write $p\runsto{n}q$ when there exists some run $\rho$ 
such that $\rho:p\runsto{n}q$.
We lift this notation to concatenations of runs, e.g., $\rho: p\runsto{n}q\runsto{m}r$ means that $\rho$ is a run on $n+m$ from $p$ to $r$ that reaches $q$ after the prefix $n$. We also incorporate this to $\minweight$ by writing $\minweight(q\runsto{n}p\runsto{m}r)$ to refer to the minimal weight of a run $\rho:q\runsto{n}p\runsto{m}r$.

\subparagraph*{Average weight.} 
Given a run $\rho$ (typically a cycle in what follows), the \emph{average weight} of $\rho$ is $\weight(\rho)/\length(\rho)$, i.e.,\ the average effect of each step in the $\rho$. When $\rho$ is a cycle, we sometimes call the average weight the \emph{slope}. 

\subparagraph*{Trim assumption.} 
A WFA is \emph{trim} if every state is reachable from $Q_0$ by some run and can reach an accepting state with some run. Note that states that do not satisfy this can be found in linear time (by using a simple graph search algorithm), and can be removed from the WFA without changing the weight of any accepted word. We assume that all WFAs are trim throughout.

\subparagraph*{$k$-Width WFAs.}
Consider a WFA $\cA$. The \emph{width} of $\cA$ is the maximal number of states that are simultaneously reachable in $\cA$.

\section{Efficiently Computing Normal Forms for Unary WFA}
\label{sec:chrobak}

\appendixlabel{appendix:chrobak}
\begin{toappendix}
\label{appendix:chrobak}
\end{toappendix}

A unary non-deterministic finite automaton is said to be in \emph{Chrobak normal form} if it consists of a path $q_1\to q_2 \to\dots\to q_m$, and $k$ simple cycles $\gamma_1,\dots, \gamma_k$, with $q_1$ initial, and $q_m$ non-deterministically branching into each of the $k$ cycles, as depicted in \cref{fig:chrobak}. 
The key result is that every $n$-state unary NFA can be converted, in polynomial time, into a unary NFA in Chrobak normal form with $m, k \le O(n^2)$ and for each cycle having length at most $n$~\cite{Chrobak86,Martinez02,To09}. 

\begin{figure}[t]
    \centering
\resizebox{0.6\textwidth}{!}{%
\begin{tikzpicture}

    \node[state, label = below:$q_1$] (q1) at (2,0) {};
    \node[state, label = below:$q_2$] (q2) at (3.5,0){};
    \node (q3) at (5,0) {$\cdots$};
    \node[state, label = below:$q_m$] (q5) at (6.5,0) {};

    \node[state] (q6) at (8,0+1.3) {};
    \node[state] (q7) at (9.5,0.55+1.3) {};
    \node[rotate=-10] (q8) at (10.75,0.55+1.3) {$\cdots$};
    \node[rotate=10] (q9) at (10.75,-0.55+1.3) {$\cdots$};
    \node[state] (q10) at (9.5,-0.55+1.3) {};

    \node[state] (p6) at (8,0-1.3) {};
    \node[state] (p7) at (9.5,0.55-1.3) {};
    \node[rotate=-10] (p8) at (10.75,0.55-1.3) {$\cdots$};
    \node[rotate=10] (p9) at (10.75,-0.55-1.3) {$\cdots$};
    \node[state] (p10) at (9.5,-0.55-1.3) {};

    \node at (10,0+1.3) {$\gamma_1$}; 
    \node at (10,0-1.3) {$\gamma_k$}; 

    \node[rotate=9] (r6) at (8,0+0.4) {$\cdots$};
    \node[rotate=-9] (l6) at (8,0-0.4) {$\cdots$};

    \node at (10,0) {$\vdots$}; 

    \draw[transition] (2-1,0) -- (q1);

    \draw[transition] (q1) -- node[above, scale=0.7]{} (q2);
    \draw[transition] (q2) -- node[above, scale=0.7]{} (q3);
    \draw[transition] (q3) -- (q5);
    \draw[transition] (q5) -- node[above, scale=0.7]{} (q6);
    \draw[transition] (q5) -- node[above, scale=0.7]{} (p6);
    \draw[transition] (q5) -- node[above, scale=0.7]{} (r6);
    \draw[transition] (q5) -- node[above, scale=0.7]{} (l6);

    \draw[transition] (q6) edge[out = 30, in = 210, bend left = 15] node[above, scale=0.6, sloped] {} (q7);
    \draw[transition] (q7) edge[out = 10, in = 180, bend left = 10] (q8);
    \draw[transition] (q9) edge[out = 10, in = 180, bend left = 10] (q10);
    \draw[transition] (q10) edge[out = 30, in = 210, bend left = 15] node[below, scale=0.6, sloped] {} (q6);

    \draw[transition] (p6) edge[out = 30, in = 210, bend left = 15] node[above, scale=0.6, sloped] {} (p7);
    \draw[transition] (p7) edge[out = 10, in = 180, bend left = 10] (p8);
    \draw[transition] (p9) edge[out = 10, in = 180, bend left = 10] (p10);
    \draw[transition] (p10) edge[out = 30, in = 210, bend left = 15] node[below, scale=0.6, sloped] {} (p6);
\end{tikzpicture}
}

     \caption{
        The structure of an automaton (without any accepting states) in Chrobak normal form.
    }
    \label{fig:chrobak}
\end{figure}

In this section, we consider two forms for unary WFAs: unions of deterministic unary WFAs and unary WFAs in Chrobak normal form.
It is known that non-deterministic unary WFA can be represented as a union of deterministic unary WFAs, which are necessarily individually lasso-shaped automata~\cite{gaubert2005rational,lombardy2001sequentialization}. 
Of course, one can also unroll the cycles so that the initial simple path segments have the same length, and then merge these initial paths (by suitably adapting the weights on the transitions of the initial path and branching transitions), to yield a single unary WFA whose underlying unary DFA is in Chrobak normal form.  
However, these results are existential and do not entail that automata in this form can be small, nor found efficiently.

For the remainder of this section, we fix our attention on a trim unary WFA $\cA = \tup{Q, Q_0, \Delta, F}$.
In~\cref{subsec:deterministic-union}, we will first prove that $\cA$ can be efficiently converted into a small union of quadratic size deterministic unary WFA (\cref{thm:deterministic-union}).
After, in~\cref{sec:cnf}, we will strengthen our result to prove that one can efficiently compute an equivalent quadratic size Chrobak normal form for $\cA$ in polynomial time (\cref{cor:normal-form}).

\subsection{Small Unions of Small Deterministic Unary WFAs}
\label{subsec:deterministic-union}

\begin{theorem} \label{thm:deterministic-union}
    There exist deterministic unary WFAs $\cB_1, \ldots, \cB_k$ such that 
    \begin{enumerate}[(1)]
        \item \label{itm:union-correctness}
        for every $n$, $\cA(n) = \min_{1 \leq i \leq k} \set{\cB_i(n)}$;
        \item \label{itm:union-cycle-count}
        $k \leq \abs{Q}$; and
        \item for every $1 \leq i \leq k$:
        \begin{enumerate}[(i)]
            \item \label{itm:union-state-count}
            $\cB_i$ consists of at most $2\abs{Q}^2 + \abs{Q}$ states and transitions, 
            \item \label{itm:union-updates}
            $\norm{\cB_i} \leq 8\abs{Q}^2\norm{\cA}$, and
            \item \label{itm:union-poly-time}
            $\cB_i$ can be computed from $\cA$ in polynomial time.
        \end{enumerate}
    \end{enumerate}
\end{theorem}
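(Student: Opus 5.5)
The plan is to decompose minimal-weight accepting runs according to the ``best'' cycle they can afford to pump. For each state $q\in Q$, let $\lambda_q$ be the minimal average weight of a simple cycle through $q$, and let $\gamma_q$ be one such cycle, of length $\ell_q\le\abs{Q}$. Both can be computed in polynomial time with Karp's minimum mean cycle algorithm applied to the strongly connected component of $q$, after restricting to cycles through $q$ (for instance via a shortest-path computation over lengths up to $\abs{Q}$). We will build one deterministic automaton $\cB_q$ for each state $q$ that lies on some cycle, which gives $k\le\abs{Q}$. The automaton $\cB_q$ is a lasso. Its stem has length $L=O(\abs{Q}^2)$ and its cycle has length $\ell_q$. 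On inputs $n$ along the stem and the cycle, it outputs
\[ \min\bigl\{\minweight(Q_0\runsto{a} q\runsto{b\ell_q} q\runsto{c} F) \bigr\}, \]
where $a+c$ is bounded by roughly $\abs{Q}^2$, $a+c\equiv n \pmod{\ell_q}$, and the middle part consists of iterations of $\gamma_q$, of weight $b\,\weight(\gamma_q)$.

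Concretely, the stem records the exact value $\cA(n)$ for $n<L$; this is fine because $\min$ with the true value is harmless. Position $L+r$ of the cycle, for $r<\ell_q$, has the structure ``best short detour of the right residue, plus the pumped $\gamma_q$''. The transitions on the cycle each carry weight $\weight(\gamma_q)/\ell_q$ on average. Since weights must be integers, we put the whole weight $\weight(\gamma_q)$ on one cycle edge and adjust the entry edge. The final acceptance is encoded by making exactly those cycle positions accepting for which some detour exists, with the offset weight pushed into the stem's last edge. Because there are several positions, the offset is instead placed on the per-position ``exit''. To keep this deterministic with accepting states and no final weights, I would let the stem have length $L+\ell_q$ and fold the residue-dependent constants into a telescoping of edge weights along one unrolled copy. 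Each edge weight is a difference of two $\minweight$ values over lengths $O(\abs{Q}^2)$, so it is bounded by $O(\abs{Q}^2)\norm{\cA}$, which yields (ii). Every quantity is a $\minweight$ over bounded lengths and is computable by tropical matrix powers, which yields (iii).

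The soundness direction, $\cB_q(n)\ge\cA(n)$, is immediate because each value of $\cB_q$ is the weight of a real accepting run of $\cA$. The core is completeness: every $n\ge L$ has a minimal-weight accepting run $\rho$ of $\cA$ with $\weight(\rho)\ge\cB_q(n)$ for some $q$. For this I would take a minimal run $\rho$ and let $q$ be a state on $\rho$ that minimises $\lambda_q$ among the states on the run. Decomposing $\rho$ into a simple path together with at most $n$ simple cycles, every cycle has average weight at least $\lambda_q$. A Chrobak-style exchange argument then applies. All but $O(\abs{Q})$ of the removed cycles can be grouped into blocks whose lengths are multiples of $\ell_q$; this uses the pigeonhole principle on prefix sums of cycle lengths modulo $\ell_q\le\abs{Q}$. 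Each block is replaced by copies of $\gamma_q$ inserted at $q$, and this does not increase the weight. What remains is a run of length $O(\abs{Q}^2)$ passing through $q$, plus pumps of $\gamma_q$, which is exactly a candidate counted by $\cB_q$.

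I expect this exchange step to be the main obstacle. It must keep the leftover part reachable through $q$ and bounded by about $2\abs{Q}^2$, which is the bound in (i). The leftover has to be kept connected even though removing cycles can disconnect the run from $q$. My first attempt would be to remove only cycles that lie entirely before or after the visit to $q$, and then rebuild the route using a path through $q$.
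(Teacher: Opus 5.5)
Your overall architecture matches the paper's: one lasso $\cB_q$ per state, a stem reproducing $\cA(n)$ for small $n$, a cycle of length $\ell_q$ and weight $\weight(\gamma_q)$ with telescoping edge weights, soundness for free, and pigeonhole on prefix sums modulo $\ell_q$ to trade groups of cycles for copies of $\gamma_q$. The gap is the step you flag yourself, and your sketched fix does not close it.

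After the trading, fewer than $\ell_q$ cycles remain, of total length up to about $\abs{Q}^2$. You cannot drop them: the length must stay exactly $n$, their total length is in general not a multiple of $\ell_q$, and dropping a negative cycle would raise the weight. So they must be reinserted into the residual skeleton, which requires the skeleton to still visit some state of each of them. If the skeleton is a simple path, or two simple paths $s\to q\to t$ obtained by splitting $\rho$ at a visit to $q$, it keeps $q$ but in general none of those states. Rerouting through a fresh path changes length and weight in an uncontrolled way, so you lose the comparison with $\weight(\rho)$.

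The missing idea is to delete cycles in a way that keeps \emph{every} state visited by $\rho$ on the skeleton. The paper proceeds as follows.
\begin{enumerate}
\item For each state of $\rho$ other than $s$, mark one transition of $\rho$ entering it, and also mark the final transition. This gives at most $\abs{Q}-1$ marks.
\item Only ever delete short cycles, i.e.\ contiguous segments of length at most $\abs{Q}$, that contain no marked transition.
\item While the skeleton has length at least $\abs{Q}^2$, cut its first $\abs{Q}^2$ transitions into $\abs{Q}$ blocks of length $\abs{Q}$. Some block carries no mark, and it contains a short cycle. So deletion continues until the skeleton is shorter than $\abs{Q}^2$.
\item The skeleton still ends at $t$ and still passes through every state of $\rho$. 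Hence the leftover cycles (total length $<\abs{Q}^2$) and $\gamma_q^x$ can be reinserted right after the corresponding marked transitions.
\end{enumerate}
This yields $\tau\gamma_q^x\tau'$ with $\length(\tau)+\length(\tau')<2\abs{Q}^2$, which is where the quadratic bounds come from.

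A smaller point concerns the definition of $\gamma_q$. Take it to be a minimal-average closed walk through $q$ of length at most $\abs{Q}$; this is what your matrix-power computation actually returns. Do not take it to be a minimal-average simple cycle through $q$. That is NP-hard to compute in general: with weight $1$ on transitions entering $q$ and $0$ elsewhere, it amounts to finding a longest simple cycle through $q$. The exchange argument only needs that every simple (indeed every short) cycle through a state $p$ has average weight at least that of $\gamma_p$, and the closed-walk version provides this.
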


In order to prove~\cref{thm:deterministic-union}, we will identify a collection of ``representative'' cycles. 
Roughly speaking, for any $n \in \NN$, any minimal-weight run of length $n$ will mostly use cycles of minimal \emph{average} weight to get to the desired length $n$.
This is because any run of length $n$ that repeatedly uses cycles with greater weight must end up having greater weight than a run that instead uses minimal average weight cycles.
In some situations, cycles with non-minimal average weight must be taken; we will argue that such cycles will only be used a small number of times.
We formalise this in~\cref{lem:linear-form-runs}.
With this, our approach to prove~\cref{thm:deterministic-union} is to, more or less, categorise runs based on the lowest average weight short cycle used.

We call a cycle \emph{short} if it consists of at most $\abs{Q}$ transitions. 
Note that all simple cycles are short. 
For every state $q \in Q$, we \emph{distinguish} $\gamma_q$ to be a short $q$-cycle that has minimal average weight among all short $q$-cycles.
Recall that the average weight of a cycle is the ratio of its weight and length.
Note that there may be multiple such cycles for a given state $q$, we can arbitrarily distinguish any one of these cycles.

\begin{lemma}\label{lem:linear-form-runs}
    Let $n \in \NN$ and $s, t \in Q$ such that $\minweight(s\runsto{n}t)<\infty$.
    There exists a minimal weight run $\rho$ from $s$ to $t$ of length $n$ with the following properties:
    \begin{enumerate}[(1)]
        \item there is a decomposition $\rho = \tau \gamma_q^x \tau'$, for some distinguished cycle $\gamma_q$ and $x \in \NN$; and
        \item $\length(\tau) + \length(\tau') < 2\abs{Q}^2$.
    \end{enumerate}
\end{lemma}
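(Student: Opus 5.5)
Among all minimal-weight runs from $s$ to $t$ of length $n$, I will pick $\rho$ by a potential argument and then perform cycle-exchange surgery. Each run of length $n$ has weight at least $\minweight(s\runsto{n}t)$, so every exchange that preserves length and endpoints and does not increase weight yields another minimal run. The aim is to reach a run that is ``mostly'' a power of one distinguished cycle $\gamma_q$, with at most $2\abs{Q}^2$ transitions outside it.

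\textbf{Step 1 (choice of the target cycle).} Any run of length at least $\abs{Q}$ visits some state twice, so it contains short cycles. Among all states $q$ visited by $\rho$, I choose one whose distinguished cycle $\gamma_q$ has minimal slope, say $\lambda$. I also fix $\rho$ among minimal runs so as to maximise the number of transitions covered by a block $\gamma_q^x$ sitting at a single occurrence of $q$. If this is awkward, I would instead maximise lexicographically the pair (minus the slope, $x$).

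\textbf{Step 2 (pumping argument).} Write $\rho=\tau\gamma_q^x\tau'$, where $\tau$ ends at an occurrence of $q$. Suppose $\length(\tau)+\length(\tau')\ge 2\abs{Q}^2$. By repeatedly extracting simple cycles, the part $\tau\tau'$ decomposes into at most $\abs{Q}$ leftover transitions plus a multiset of short cycles $c_1,\dots,c_r$. Each $c_j$ is based at a state $p_j$ visited by $\rho$. By the choice of $\gamma_{p_j}$ and the minimality of $\lambda$, each has slope at least $\lambda$. The total length of the $c_j$ is at least $2\abs{Q}^2-\abs{Q}$, so there are many of them. By pigeonhole on lengths in $[1,\abs{Q}]$, some length $\ell$ occurs at least $\abs{Q}$ times, giving about $\abs{Q}$ cycles of length $\ell$. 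Among any $\ell_q=\length(\gamma_q)\le\abs{Q}$ of them, I can take $\ell_q$ copies of length $\ell$, with total length $\ell\ell_q$. I remove these copies and insert $\ell$ extra iterations of $\gamma_q$. This preserves both length and endpoints. The weight changes by $\ell\cdot\weight(\gamma_q)-\sum(\text{weights removed})\le \ell\ell_q\lambda-\ell\ell_q\lambda=0$. The result is still minimal, and it has a strictly larger $x$, contradicting the choice in Step 1. The count gives at most $\abs{Q}(\abs{Q}-1)$ cycle transitions of each length class before the pigeonhole fires. Combined with the leftover, this bounds $\length(\tau)+\length(\tau')$ by roughly $\abs{Q}^2+\abs{Q}$, which is below $2\abs{Q}^2$. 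A crude version of this count already suffices.

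\textbf{Main obstacle.} The delicate point is that removing cycles from $\tau\tau'$ must leave a valid run, and the new $\gamma_q$ copies must attach at an occurrence of $q$. Removing cycles that are nested in the extraction order is fine, since each removal leaves a run with the same endpoints. However, the remaining run must still pass through $q$, so that the enlarged block $\gamma_q^{x+\ell}$ can be reinserted. I would handle this by never removing cycles that contain the chosen occurrence of $q$: the block $\gamma_q^x$ is kept intact, and only cycles inside $\tau$ and $\tau'$ are removed. Since $\tau$ and $\tau'$ are separate runs, I extract cycles from each separately, which gives leftover at most $2\abs{Q}$. The slope bound requires each removed cycle's base state to still be visited by $\rho$, which holds because it was visited in the original run. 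If $x=0$ initially, I choose $q$ on a short cycle of minimal slope appearing in $\rho$, with $\gamma_q^0$ inserted at an occurrence of $q$, and the same argument applies. Runs of length less than $2\abs{Q}^2$ are trivial with $x=0$.
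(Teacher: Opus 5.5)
Your overall strategy matches the paper's core idea: take an extremal minimal run and trade short cycles of slope at least $\lambda$ for copies of $\gamma_q$, with minimality of the weight ensuring nothing is lost. However, two steps fail as written.

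\textbf{The counting step.} A total cycle length of at least $2\abs{Q}^2-\abs{Q}$ only guarantees about $2\abs{Q}$ cycles. It does not guarantee $\abs{Q}$ cycles of one and the same length. Your per-class bound of $\abs{Q}(\abs{Q}-1)$ transitions has to be summed over all $\abs{Q}$ length classes. That gives a cubic bound, not $\abs{Q}^2+\abs{Q}$.

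Concretely, take $\length(\gamma_q)=\abs{Q}=7$ and six cycles of each length $1,\dots,6$. The total length is $126>2\abs{Q}^2$, yet no length occurs seven times, so your exchange never fires. Trading only equal-length cycles cannot give the quadratic bound.

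The missing idea is the modular pigeonhole on prefix sums (\cref{clm:trading}). Take any $\ell_q=\length(\gamma_q)$ cycles with lengths $s_1,\dots,s_{\ell_q}$. Either some prefix sum $s_1+\dots+s_i$ is divisible by $\ell_q$, or two prefix sums agree modulo $\ell_q$. Either way, a nonempty subfamily has total length $e\ell_q$ and can be replaced by $e$ copies of $\gamma_q$ without increasing the weight. This needs only $\ell_q\le\abs{Q}$ cycles, and a length of $\abs{Q}^2$ supplies that many.

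\textbf{The surgery step.} When you fully decompose $\tau$ into a simple path plus iteratively extracted cycles, those cycles are nested. Your remark about extraction order only covers deleting a prefix of the extraction sequence, but you select the cycles to delete by length. Deleting such a subfamily need not leave a valid run: a kept cycle may have been extracted from inside a deleted one, and its base state may no longer be visited.

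For example, suppose $\tau$ visits $s,a,b,c,b,a,d$. You extract $b\,c\,b$ and then $a\,b\,a$. Deleting $a\,b\,a$ while keeping $b\,c\,b$ leaves $s,a,d$, which never visits $b$, so $b\,c\,b$ cannot be reinserted.

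The paper avoids this with a marking step. It marks one incoming transition of each visited state, plus the final transition, and extracts only cycles that avoid marked transitions. Then every state stays visited, and any cycle can be discarded or reinserted right after its marked transition.

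Within your potential framework there is a simpler fix:
\begin{itemize}
\item Cut $\tau$ into consecutive blocks of $\abs{Q}$ transitions. Each block contains a contiguous short cycle.
\item These cycles are pairwise non-overlapping segments, so any subfamily can be deleted simultaneously while keeping the endpoints of $\tau$.
\item If $\length(\tau)\ge\abs{Q}^2$, you get $\abs{Q}\ge\ell_q$ such cycles. The modular pigeonhole then gives an exchange that strictly increases $x$, a contradiction. The same argument applies to $\tau'$.
\end{itemize}
Hence $\length(\tau)<\abs{Q}^2$ and $\length(\tau')<\abs{Q}^2$, which gives the required bound.

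Use the lexicographic potential rather than maximising $x$ alone. When $x=0$, the inserted copies of $\gamma_q$ may visit new states whose distinguished cycles have smaller slope. First minimising the slope over all minimal runs is what makes the contradiction go through in that case.
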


We remark that~\cref{lem:linear-form-runs} does not specify that the run of length $n$ starts from an initial state, or leads to an accepting state.
Though, it immediately follows that if $\cA(n) \neq \infty$, then there exists an accepting run of length $n$ and weight $\cA(n)$ with properties (1) and (2) as specified by~\cref{lem:linear-form-runs}.

To prove~\cref{lem:linear-form-runs}, we will perform a cycle replacement argument on an arbitrary minimal weight run of length $n$  to obtain $\rho$.
Some of the following ideas appear in Rackoff's rearrangement and replacement arguments used to analyse the length of runs that witness unboundedness in vector addition systems~{\cite[Lemma 4.5]{Rackoff78}}. 
Similar arguments have also been used in the development of Chrobak normal form for finite automata, in particular in Lin's proof~\cite{To09} which makes a subtle correction to Chrobak's~\cite{Chrobak86}. 
However, we must take care to ensure the weights are correctly maintained.
The main technical contribution is to show that this can still be completed efficiently.
 
\begin{proof}[Proof of~\cref{lem:linear-form-runs}]
    Initially, let $\pi$ be an arbitrary run from $s$ to $t$ of length $n$ with minimal weight $w \in \ZZ$. 
    Let $Q' \sset Q$ be the subset of states that are visited by $\pi$; note that $Q' \supseteq \set{s,t}$.
    First, for every state $q \in Q'\setminus\set{s,t}$, that is visited by $\pi$, we ``mark'' exactly one instance\footnote{To be clear, some transitions may occur multiple times in $\pi$, if one is marked then only one instance of said transition is marked. We also note that, for a particular state $q \in Q'\setminus\set{s,t}$, it does not matter which incoming transition is marked.} of a transition in $\pi$ that leads to $q$.
    We also mark the final transition (which leads to $t$) and note that we need not mark any transition to $s$.
    In total, there are at most $\abs{Q}-1$ many marked transitions.
    
    \begin{clarep}\label{clm:skeleton-length}
        If $\length(\pi) \geq \abs{Q}^2$, then there is a short cycle without any marked transitions in $\pi$. 
    \end{clarep}
    \begin{claimproof}
        Consider the first $\abs{Q}^2$ transitions of $\pi$.
        In fact, consider these transitions as $\abs{Q}$ blocks of $\abs{Q}$ transitions. 
        Thanks to the pigeonhole principle, observe that every block must contain at least one short cycle.
        Now, since there are at most $\abs{Q}-1$ many marked transitions, one of these blocks consists only of transitions that are not marked.
        Thus, inside this block, there is a short cycle that does not contain any marked transitions.
    \end{claimproof}

    One by one, short cycles that do not contain marked transitions are removed from $\pi$ and placed in $C$, a multiset that is initially empty.
    The order in which short cycles are removed does not matter but, thanks to~\cref{clm:skeleton-length}, we can always identify and remove a short cycle from $\pi$ until the length of $\pi$ is less than $\abs{Q}^2$.

    Now, let $q \in Q'$ be the state visited by $\pi$ that has the lowest average weight distinguished cycle $\gamma_q$.
    Formally, 
    \begin{equation*}
        q \coloneqq \arg\min_{p\in Q'}\set{\frac{\weight(\gamma_p)}{\length(\gamma_p)}}.
    \end{equation*}
    We will now repeatedly replace cycles from $C$ with several iterations of $\gamma_q$.
    We must do this in a way that preserves the length. 
    For this we use the following combinatorial claim with $m = \abs{Q}$, $\ell = \length(\gamma_q)$, and $S$ is the multiset of all lengths of all cycles in $C$.    

    \begin{clarep} \label{clm:trading}
        Let $m, \ell \in \NN$ such that $\ell \leq m$ and let $S$ be a multiset of naturals such that, for every $s \in S$, $1 \leq s \leq m$.
        If $\sum_{s \in S} s \geq m^2$, then there exist a multiset $S' \sset S$ and $e \in \NN$ such that $\sum_{s \in S'} s = e\ell$.
    \end{clarep}
    \begin{claimproof}
        First, since $\sum_{s \in S} s \geq m^2$ and, for every $s \in S$, $s \leq m$, it is true that $\abs{S} \geq m$.
        Arbitrarily enumerate $m$ elements of $S$: $s_1, \ldots, s_m$.
        From this enumeration, we define the partial sums $t_1 = s_1$, $t_2 = s_1 + s_2$, $\ldots$, $t_m = s_1 + s_2 + \ldots + s_m$.
        There are two cases to consider.
        
        Case 1: there exists $1 \leq i \leq m$ such that $t_i \equiv 0 \bmod \ell$. Clearly, in this case, $e = \frac{t_i}{\ell}$ and $S' = \set{s_1, \ldots, s_i}$ satisfy this claim.
        
        Case 2: there does not exist $1 \leq i \leq m$ such that $t_i \equiv 0 \bmod \ell$. 
        By the pigeonhole principle, there exist $1 \leq i < j \leq m$ such that $t_i \equiv t_j \bmod \ell$. 
        This means that $t_j - t_i \equiv 0 \bmod \ell$.
        Thus, in this case $e = \frac{t_j - t_i}{\ell}$ and $S' = \set{s_{i+1}, \ldots, s_j}$ satisfy this claim.
    \end{claimproof}

    If the sum of all lengths of cycles in $C$ is at least $\abs{Q}^2$, then we can identify a collection of short cycles $\sigma_1, \ldots, \sigma_i$ such that $\length(\sigma_1) + \ldots + \length(\sigma_i) = e\cdot\length(\gamma_q)$, for some $e \in \NN$.
    This allows us to remove $\sigma_1, \ldots, \sigma_i$ to later be replaced with $e$ iterations of $\gamma_q$.
    Soon, we will argue that we can reconstruct a run of length $n$ of the form $\tau \gamma_q^x \tau'$.
    Recall that $\pi$ was assumed to have minimal weight; we therefore know that $\weight(\sigma_1) + \ldots + \weight(\sigma_i) = e\cdot \weight(\gamma_q)$.
    
    The above cycle replacement is repeated until the sum of the lengths of cycles in $C$ that remain is less than $\abs{Q}^2$.
    Finally, we will argue that a run from $s$ to $t$ of length $n$ can be reconstructed. 
    Recall that for every state in $Q' \setminus \set{s}$, there is a marked transition in $\pi$ that was not removed during the cycle removal process. 
    This means for every state $q \in Q'\setminus\set{s}$, $q$-cycles can be reinserted into $\pi$: specifically a $q$-cycle can be taken immediately after the marked transition leading to $q$.
    Furthermore, $s$-cycles can be inserted at the very beginning of the run.
    It is also important to note that since the final transition leading to $t$ was marked, we know that $\pi$ is (still) a run from $s$ to $t$. 
    Let $\pi'$ be the run that is obtained by adding all short cycles left in $C$ after replacement (specifically, $\pi'$ does not contain the replacement $x$ iterations of $\gamma_q$). 
    Since the combined length of cycles left in $C$ was at most $\abs{Q}^2$ and $\length(\pi) < \abs{Q}^2$; we deduce that $\length(\pi') < 2\abs{Q}^2$.
    
    What remains are the $x \coloneqq \frac{n - \length(\pi')}{\length(\gamma_q)}$ replacement iterations of $\gamma_q$.
    In just the same way, these cycle iterations can be inserted after the marked transition for $q$.
    Specifically, let $\tau$ be the prefix of $\pi'$ before and including the marked transition for $q$, and let $\tau'$ be the suffix of $\pi'$ after the marked transition for $q$.
    Then $\rho = \tau \gamma_q^x \tau'$ is a minimal weight run from $s$ to $t$ of length $n$.
    Clearly $\length(\tau)+\length(\tau') = \length(\pi') < 2\abs{Q}^2$.
\end{proof}

Now that we have this standard ``linear form'' for minimal weight runs, we can construct a collection of deterministic unary WFAs $\cB_1, \ldots, \cB_k$ such that, for every $n \in \NN$, $\cA(n) = \min_{1 \leq i \leq k} \set{\cB_i(n)}$.
In fact, we will construct one deterministic unary WFA $\cB_q$, for every $q \in Q$ for which there exists a (short) cycle at $q$ in $\cA$.
Recall that deterministic unary WFAs have a basic lasso-shaped structure: there is a simple path leading to a single simple cycle.
As is the case with any unary deterministic automaton, $\cB_q$ will have one run of length $n$, for every $n \in \NN$.
The cycle in $\cB_q$ will have the same length and overall weight as $\gamma_q$ but the accepting states and individual transition weights will likely differ.
The length of the simple path of $\cB_q$ will be bounded above by $2\abs{Q}^2$.

For every $q \in Q$, we define $\Pi_q$ to be the set of all runs in $\cA$ that (i) start at any initial state, (ii) end at any final state, and (iii) are of the form $\tau\gamma_q^x\tau'$ such that $\length(\tau) + \length(\tau') < 2\abs{Q}^2$ and $x \in \NN$.
It is important to note that $\Pi_q$ contains all runs of length less than $2\abs{Q}^2$, even those that do not visit $q$.
This fact will be most convenient for proving that $\cA$ can be converted into an equivalent quadratic size unary WFA in Chrobak normal form (in particular for~\cref{clm:same-paths} on~\cpageref{clm:same-paths}).

It is important to observe that runs across all $\Pi_q$ suffice for finding minimal weight runs.
Thanks to~\cref{lem:linear-form-runs}, it is true that for every $n \in \NN$ such that $\cA(n) \neq \infty$, there exists $q \in Q$ and $\pi \in \Pi_q$ such that $\weight(\pi) = \cA(n)$.
Now, for every $q \in Q$, we will define a weight function $f_q : \NN \to \bbZinf$ based on the paths in $\Pi_q$. 
Precisely, $f_q(n) \coloneqq \min\set{\weight(\pi) : \pi \in \Pi_q \text{ and } \length(\pi) = n}$.

\begin{clarep} \label{clm:correctness-of-union}
    For every $n \in \NN$, $\cA(n) = \min_{q \in Q}\set{f_q(n)}$.
\end{clarep}
\begin{claimproof}
    This claim follows from~\cref{lem:linear-form-runs} and the fact that $\bigcup_{q \in Q} \Pi_q$ contains all runs of the form $\tau\gamma_q^x\tau'$ that start from an initial state, and such that $\length(\tau)+\length(\tau') < 2\abs{Q}^2$ and $x \in \NN$.
\end{claimproof}

Our goal is to construct $\cB_q$ so that, for every $n \in \NN$, the run of length $n$ from the initial state of $\cB_q$ has weight $f_q(n)$.
Later, we will fix which control states are accepting to capture the situations when $\cA(n) \in \ZZ$ and $\cA(n) = \infty$.
Now, we will prove that $f_q(n)$ is periodic from $n \geq 2\abs{Q}^2$. 
We will use this fact to create the cycle of $\cB_q$.

\begin{claim} \label{clm:perioid-weights}
    Fix $q \in Q$.
    For every $n \geq 2\abs{Q}^2$, $f_q(n + \length(\gamma_q)) = f_q(n) + \weight(\gamma_q)$.
\end{claim}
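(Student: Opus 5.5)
The plan is to prove the equality by establishing the two inequalities $f_q(n+\length(\gamma_q)) \leq f_q(n) + \weight(\gamma_q)$ and $f_q(n) \leq f_q(n+\length(\gamma_q)) - \weight(\gamma_q)$ separately. Both follow by a direct pumping argument inside the set $\Pi_q$: insert or remove one iteration of $\gamma_q$. Write $\ell \coloneqq \length(\gamma_q)$ and $w \coloneqq \weight(\gamma_q)$. The value $\infty$ is handled by the convention $\min\emptyset = \infty$: each inequality is trivial when its right-hand side is $\infty$, and together they show that $f_q(n) = \infty$ if and only if $f_q(n+\ell) = \infty$.

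For the first inequality, assume $f_q(n) < \infty$. Take $\pi \in \Pi_q$ with $\length(\pi) = n$ and $\weight(\pi) = f_q(n)$, decomposed as $\pi = \tau\gamma_q^x\tau'$ with $\length(\tau)+\length(\tau') < 2\abs{Q}^2$. The key observation is that $n \geq 2\abs{Q}^2$ forces $x\ell = n - \length(\tau) - \length(\tau') > 0$, so $x \geq 1$. Hence $\tau$ genuinely ends at $q$ and $\tau'$ starts at $q$. (Recall that $\Pi_q$ also contains short runs not visiting $q$; the length bound rules these out.) It follows that $\tau\gamma_q^{x+1}\tau'$ is a well-defined run. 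It has the same initial and final states as $\pi$, so it is accepting, and its decomposition has the same $\tau,\tau'$, so it lies in $\Pi_q$. Its length is $n+\ell$ and its weight is $f_q(n)+w$, which gives the first inequality.

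For the second inequality, assume $f_q(n+\ell) < \infty$. Take an optimal $\pi = \tau\gamma_q^x\tau' \in \Pi_q$ of length $n+\ell$. Now $x\ell = n+\ell-\length(\tau)-\length(\tau') > \ell$, so $x \geq 2$. Removing one iteration gives $\tau\gamma_q^{x-1}\tau'$ with $x-1 \geq 1$. This is again an accepting run of the required form, so it lies in $\Pi_q$, and it has length $n$ and weight $f_q(n+\ell) - w$. This gives the second inequality.

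I expect no real obstacle. The only point needing care is the well-definedness of the pumped and unpumped runs. This rests on the guarantee $x \geq 1$ (respectively $x \geq 2$), which is exactly what the threshold $n \geq 2\abs{Q}^2$ together with the bound $\length(\tau)+\length(\tau') < 2\abs{Q}^2$ in the definition of $\Pi_q$ provides.
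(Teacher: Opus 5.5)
Your proof is correct and follows essentially the same route as the paper: pump one extra iteration of $\gamma_q$ to get $f_q(n+\length(\gamma_q)) \le f_q(n)+\weight(\gamma_q)$, and remove one iteration (the length bound guarantees one is available) for the reverse inequality, with the $\infty$ case following from the same two inequalities. Your explicit check that $x \ge 1$ already holds at length $n$, so that $\tau$ really ends at $q$ and $\gamma_q$ can be inserted, is a detail the paper leaves implicit.
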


\begin{claimproof}
    Without loss of generality, we shall assume that $\weight(\gamma_q) \geq 0$; the scenario in which $\weight(\gamma_q) < 0$ is analogous.

    First, we shall consider the case in which $f_q(n) \in \ZZ$ (i.e.\ $f_q(n) \neq \infty$).
    Given that $f_q(n) \in \ZZ$, there exists a path $\tau\gamma_q^x\tau' \in \Pi_q$ of length $n$ with weight $f_q(n)$.
    From this, one can immediately construct a run of length $n + \length(\gamma_q)$: $\tau\gamma_q^{x+1}\tau' \in \Pi_q$.
    This run has weight $f_q(n) + \weight(\gamma_q)$, thus $f_q(n + \length(\gamma_q)) \leq f_q(n) + \weight(\gamma_q)$.

    Now, for sake of contradiction, assume that $f_q(n + \length(\gamma_q)) < f_q(n) + \weight(\gamma_q)$.
    Suppose that $\tau\gamma_q^y\tau'$
    is the run in $\Pi_q$ of length $n + \length(\gamma_q)$ with minimal weight, $f_q(n+\length(\gamma_q))$.
    Since $n + \length(\gamma_q) \geq 2\abs{Q}^2 + \length(\gamma_q)$ and $\length(\tau) + \length(\tau') < 2\abs{Q}^2$, we know that $y \geq 1$.
    This means that $\tau\gamma_q^{y-1}\tau' \in \Pi_q$ is a run of length $n$ and weight $f_q(n+\length(\gamma_q)) - \weight(\gamma_q)$.
    Thus $f_q(n) \leq f_q(n+\length(\gamma_q)) - \weight(\gamma_q)$.
    This creates a contradiction as $f_q(n + \length(\gamma_q)) - \weight(\gamma_q) < f_q(n)$.
    Thus, in the case that $f_q(n) \in \ZZ$, $f_q(n+\length(\gamma_q)) = f_q(n) + \weight(\gamma_q)$. 

    Now, to conclude the proof, we shall consider the case in which $f_q(n) = \infty$. 
    In this case, there is no path of length $n$ in $\Pi_q$.
    Suppose, for sake of contradiction, that $f_q(n + \length(\gamma_q)) \neq \infty$. 
    This means that there is a path $\tau \gamma_q^y \tau' \in \Pi_q$ of length $n + \length(\gamma_q)$. 
    Since $n + \length(\gamma_q) \geq 2\abs{Q}^2 + \length(\gamma_q)$ and $\length(\tau) + \length(\tau') < 2\abs{Q}^2$, we know that $y \geq 1$.
    This means that $\tau \gamma^{y-1} \tau' \in \Pi_q$ is a run of length $n$ and weight $f_q(n + \length(\gamma_q) - \weight(\gamma_q) < \infty$.
    This creates a contradiction, and so we conclude that if $f_q(n) = \infty$, then $f_q(n + \length(\gamma_q)) = f_q(n) + \weight(\gamma_q) = \infty$. 
\end{claimproof}

Now we are ready to construct $\cB_q$.
There will be $2\abs{Q}^2 + \length(\gamma_q)$ many states: $2\abs{Q}^2-1$ will form a simple path leading to a cycle of $\length(\gamma_q)+1$ many states.
We use $\beta_q$ to refer to the cycle of $\cB_q$.

First, we shall consider which states of $\cB_q$ are accepting.
This part is straightforward: for every $n < 2\abs{Q}^2 + \length(\gamma_q)$ for which $f_q(n) \in \ZZ$, we mark the $n$-th state of $\cB_q$ as accepting. 
Otherwise, if $f_q(n) \neq \infty$, then the $n$-th state is not an accepting state.

Now, let $n_1 < n_2 < \ldots < n_\ell$ be the lengths of words accepted by $\cA$ such that (i) for every $i \in [\ell]$, $f_q(n_i) \in \ZZ$, (ii) $n_{\ell-1} < 2\abs{Q}^2 + \length(\gamma_q) \leq n_\ell$, and (iii) for every $n \notin \set{n_1, \ldots, n_\ell}$ such that $n < n_\ell$, $f_q(n) = \infty$.
In other words, $n_1, n_2, \ldots, n_{\ell-1}$ are the lengths of words with accepting runs in $\Pi_q$ that are at most $2\abs{Q}^2 + \length(\gamma_q)$, and $n_\ell$ is the length of the shortest word with a run in $\Pi_q$ whose length is at least $2\abs{Q}^2 + \length(\gamma_q)$.
Now, consider the sequence of weight differences of consecutive accepted lengths: $t = (f_q(n_1), f_q(n_2) - f_q(n_1), \ldots, f_q(n_\ell) - f_q(n_{\ell-1}))$.
This sequence satisfies the telescoping property that, for every $i \in [\ell]$, $\sum_{j=1}^i t[j] = f_q(n_1) + (f_q(n_2) - f_q(n_1)) + \ldots + (f_q(n_i) - f_q(n_{i-1})) = f_q(n_i)$.

We will use the sequence $t$ for the transition weights.
Overall, we wish to construct $\cB_q$ so that when the $n$-th state is reached, the current weight of the run is $f_q(n)$, assuming that $f_q(n) \neq \infty$. 
Concretely, before $n_i$-th state is reached, the weight of the run must be $f_q(n_i)$. 
To achieve this, on the first transition (the transition leading out from the initial state), the transition weight will be $t[1] = f_q(n_1)$. 
Then, for every $i \in [\ell]$, the transition from the $n_i$-th state has the weight $t[i+1]$. 
On transitions from non-accepting states, we just set the transition weight to $0$. 
See~\cref{fig:deterministic-U-WFA} for an example of $\cB_q$ in which $n_1 = 1, n_2 = 3, \ldots, n_i = 2\abs{Q}^2, \ldots, n_\ell = 2\abs{Q}^2+\length(\gamma_q)$.

\begin{figure}[t]
    \centering
\begin{tikzpicture}
    \node[state] (q00) at (-2,0) {};
    \node[state] (q0) at (-0.5,0) {};
        \node[state, inner sep = 0.6mm] at (-0.5,0) {};
    \node[state] (q1) at (1,0) {};
    \node[state] (q2) at (2.5,0) {};
        \node[state, inner sep = 0.6mm] at (2.5,0) {};
    \node (q3) at (4,0) {$\cdots$};
    \node[state] (q4) at (5,0) {};
    \node[state] (q5) at (6.5,0) {};
        \node[state, inner sep = 0.6mm] at (6.5,0) {};
    \node[state] (q6) at (8,0) {};
        \node[state, inner sep = 0.6mm] at (8,0) {};
    \node[state] (q7) at (9.5,0.75) {};
    \node[rotate=-10] (q8) at (11,0.75) {$\cdots$};
    \node[rotate=10] (q9) at (11,-0.75) {$\cdots$};
    \node[state] (q10) at (9.5,-0.75) {};

    \node at (10.15,0) {$\beta_q$}; 

    \draw[transition] (-2.5,0) -- (q00);

    \draw[transition] (q00) -- node[above, scale=0.8]{$t[1]$} (q0);
    \draw[transition] (q0) -- node[above, scale=0.8]{$t[2]$} (q1);
    \draw[transition] (q1) -- node[above, scale=0.8]{$0$} (q2);
    \draw[transition] (q2) -- node[above, scale=0.8,pos=0.4]{$t[3]$}(q3);
    \draw[transition] (q3) -- (q4);
    \draw[transition] (q4) -- node[above, scale=0.8]{$0$} (q5);
    \draw[transition] (q5) -- node[above, scale=0.8]{$t[i]$} (q6);

    \draw[transition] (q6) edge[out = 30, in = 210, bend left = 15] node[above, scale=0.8, sloped] {$t[i+1]$} (q7);
    \draw[transition] (q7) edge[out = 10, in = 180, bend left = 10] node[above, scale=0.8, sloped] {$0$} (q8);
    \draw[transition] (q9) edge[out = 10, in = 180, bend left = 10] (q10);
    \draw[transition] (q10) edge[out = 30, in = 210, bend left = 15] node[below, scale=0.8, sloped] {$t[\ell]$} (q6);
\end{tikzpicture}

     \vspace{-3mm}
    \caption{
        The deterministic unary WFA $\cB_q$.
    }
    \label{fig:deterministic-U-WFA}
\end{figure}

\begin{clarep} \label{clm:size-of-b}
    $\norm{\cB_q} \leq 8\abs{Q}^2\norm{\cA}$.
\end{clarep}
\begin{claimproof}
    For any $n \in \NN$, any run $\pi$ in $\cA$ of length $n$ must satisfy $\abs{\weight(\pi)} \leq n\norm{\cA}$.
    This means that $\abs{f_q(n)} \leq n\norm{\cA}$.
    Recall that the transition weights in $\cB_q$ are either $0$ or $t[i] = f_q(n_i) - f_q(n_i-1)$, for $1 \leq n_{i-1} < n_i < 2\abs{Q}^2 + 2\cdot\length(\gamma_q)$.
    Thus $\abs{t[i]} \leq n_i\norm{\cA} - (-n_{i-1}\norm{\cA}) = (n_{i-1}+n_i)\norm{\cA} \leq 2n_i\norm{\cA}$.
    The greatest value of $n_i$ is $2\abs{Q}^2 + 2\cdot\length(\gamma_q) - 1 \leq 2\abs{Q}^2 + 2\abs{Q} \leq 4\abs{Q}^2$. 
    Together, we conclude that the greatest transition weight $\norm{\cB_q}$ is bounded above by $8\abs{Q}^2\norm{\cA}$.
\end{claimproof}

\begin{clarep} \label{clm:cycle-weight}
    $\weight(\beta_q) = \weight(\gamma_q)$.
\end{clarep}
\begin{claimproof}
    Recall that $n_1, n_2, \ldots, n_{\ell-1}$ are the lengths of words with accepting runs in $\Pi_q$ that are at most $2\abs{Q}^2 + \length(\gamma_q)$, and $n_\ell$ is the length of the shortest word with a run in $\Pi_q$ whose length is at least $2\abs{Q}^2 + \length(\gamma_q)$. 
    Let $i$ is the least index such that $n_i \geq 2\abs{Q}^2$.
    We will now argue that $n_\ell = n_i + \length(\gamma_q)$. 
    First, as there is a run $\tau \gamma_q^x \tau'$ of length $n_i$ in $\Pi_q$, we know that there exists a run of length $n_i + \length(\gamma_q)$ in $\Pi_q$. 
    Clearly as $n_i \geq 2\abs{Q}^2$, we deduce that $n_i + \length(\gamma_q) \geq 2\abs{Q}^2 + \length(\gamma_q)$.
    This means that $n_\ell \leq n_i + \length(\gamma_q)$.
    Now, for sake of contradiction, assume that $n_\ell < n_i + \length(\gamma_q)$.
    Suppose $\tau \gamma_q^y \tau'$ is a run length $n_\ell$ in $\Pi_q$.
    Since $n_\ell \geq 2\abs{Q}^2 + \length(\gamma_q)$ and $\length(\tau) + \length(\tau') < 2\abs{Q}^2$, we deduce that $y \geq 1$.
    Thus, $\tau \gamma_q^{y-1} \tau' \in \Pi_q$ is a run of length $n_\ell - \length(\gamma_q)$. 
    Since $n_\ell \geq 2\abs{Q}^2 + \length(\gamma_q)$, it holds that $n_\ell - \length(\gamma_q) \geq 2\abs{Q}^2$. 
    Moreover, since $n_\ell < n_i + \length(\gamma_q)$, we deduce that $n_\ell - \length(\gamma_q) < n_i$. 
    Together, these facts contradict the definition of $n_i$.
    So we conclude that $n_\ell \geq n_i + \length(\gamma_q)$ and $n_\ell \leq n_i + \length(\gamma_q)$.

    Now, consider the transition weights of $\beta_q$. 
    Until the $n_i$-th state, the transition weights are 0.
    Then, the transition from the $n_i$-th state has weight $t[i+1] = f_q(n_{i+1}) - f_q(n_i)$.
    Following this, until the $n_{i+1}$-st state, the transition weights are again 0.
    Then, the transition from the $n_{i+1}$-st state has weight $t[i+2] = f_q(n_{i+2}) - f_q(n_{i+1})$.
    This continues; overall the weight of $\beta_q$ is
    \begin{equation*}
        t[i+1] + \cdots + t[\ell] = (f_q(n_{i+1}) - f_q(n_i)) + \ldots + (f_q(n_\ell) - f_q(n_{\ell-1})) = f_q(n_\ell) - f_q(n_i).
    \end{equation*}
    Now we can combine the fact that $n_i + \length(\gamma_q) = n_\ell$ with~\cref{clm:perioid-weights}.
    Since $n_i \geq 2\abs{Q}^2$, we know that $f_q(n_\ell) = f_q(n_i + \length(\gamma_q)) = f_q(n_i) + \weight(\gamma_q)$.
    Thus, we conclude that $\weight(\beta_q) = f_q(n_\ell) - f_q(n_i) = \weight(\gamma_q)$. 
\end{claimproof}

\begin{clarep} \label{clm:deterministic-weights}  
    For every $n \in \NN$, $\cB_q(n) = f_q(n)$.
\end{clarep}
\begin{claimproof}
    This proof is split into two cases: $n < 2\abs{Q}^2 + \length(\gamma_q)$ and $n \geq 2\abs{Q}^2 + \length(\gamma_q)$.
    
    \proofsubparagraph*{Case 1:} $n < 2\abs{Q}^2 + \length(\gamma_q)$.
    Here, the run in $\cB_q$ of length $n$ simply takes the first $n$ transitions and does not complete an iteration of the cycle $\beta_q$.
    
    Recall that the $n$-th state of $\cB_q$ is accepting if $f_q(n) \in \ZZ$.
    Thus, if $f_q(n) = \infty$, we know that the one run of length $n$ in $\cB_q$ leads to a non-accepting state, so $\cB_q(n) = f_q(n) = \infty$. 

    Now, suppose that $f_q(n) \in \ZZ$.
    The one run $\pi$ of length $n$ in $\cB_q$ leads to an accepting state. 
    Recall that $n_1 < n_2 < \ldots < n_{\ell-1}$ are the lengths of words such that $f_q(n_i) \in \ZZ$ and $n_{\ell-1} < 2\abs{Q}^2 + \length(\gamma_q)$.
    Suppose that $i \in [\ell]$ is the index such that $n_i = n$. 
    By definition of the weights on the transitions of $\cB_q$, we conclude that $\weight(\pi) = t[1] + t[2] + \ldots + t[i] = f_q(n_1) + (f_q(n_2) - f_q(n_1)) + \ldots + (f_q(n_i) - f_q(n_{i-1})) = f_q(n_i)$.
    Thus $\cB_q(n) = \weight(\pi) = f_q(n)$, as required.
    
    \proofsubparagraph*{Case 2:} 
    $n \geq 2\abs{Q}^2 + \length(\gamma_q)$.
    We will prove this case by induction on $n$.
    The base case is handled by Case 1.
    Assume $B_q(n') = f_q(n')$ for all $n' < n$. 
    Since $n \geq 2\abs{Q}^2 + \length(\gamma_q)$ and $\cB_q$ is a lasso, $\cB_q(n) = \cB_q(n-\length(\gamma_q)) + \weight(\beta_q)$. 
    By the inductive assumption, $\cB_q(n-\length(\gamma_q)) = f_q(n-\length(\gamma_q))$ and, thanks to~\cref{clm:cycle-weight}, we know that $\weight(\beta_q) = \weight(\gamma_q)$. 
    Thus, $\cB_q(n) = f_q(n-\length(\gamma_q)) + \weight(\gamma_q)$.
    We conclude this case thanks to~\cref{clm:perioid-weights} $f_q(n) = f_q(n-\length(\gamma_q)) + \weight(\gamma_q)$.
\end{claimproof}

We can now put the pieces together to prove~\cref{thm:deterministic-union}.
All that remains is to prove that $\cB_q$ can be computed from $\cA$ in polynomial time.

\begin{proof}[Proof of~\cref{thm:deterministic-union}]
    For every $q \in Q$, we construct $\cB_q$ as is described after~\cref{clm:perioid-weights}.
    Recall Claims~\ref{clm:correctness-of-union} and~\ref{clm:deterministic-weights}: for every $n \in \NN$, $\cA(n) = \min_{q \in Q}\set{f_q(n)}$ and $\cB_q = f_q(n)$, for every $q \in Q$.
    Together, we obtain condition~\ref{itm:union-correctness} of~\cref{thm:deterministic-union}.

    Condition~\ref{itm:union-cycle-count} follows from the fact that there is at most one deterministic unary WFA $\cB_q$ for every state $q\in Q$ of $\cA$.

    Condition~\ref{itm:union-state-count} immediately follows from the construction of $\cB_q$: there are at most $2\abs{Q}+\length(\gamma_q) \leq 2\abs{Q}^2 + \abs{Q}$ states and transitions (see~\cref{fig:deterministic-U-WFA} on~\cpageref{fig:deterministic-U-WFA}).
    Condition~\ref{itm:union-updates} is covered by~\cref{clm:size-of-b}.
    Finally, for condition~\ref{itm:union-poly-time}, it is important to note that we only need to know the length of the distinguished $q$-cycles; for this, we use~\cref{clm:compute-cycle-lengths}.
    We also need to compute the transition weights for $\cB_q$; for this, we use~\cref{clm:compute-fq}. 
    These two claims both rely on the basic fact (\cref{clm:compute-an}) that the weight of a minimal run of length $n$ can be computed in polynomial time from $\cA$ and $n$.
    We also note that~\cref{clm:compute-an} can be used to decide which of the states in $\cB_q$ are accepting in polynomial time.

    \begin{clarep} \label{clm:compute-an}
        Let $\cA'$ be an arbitrary unary WFA with states $Q'$,
        let $S, T \sset Q'$, and let $n \in \NN$.
        In polynomial time from $\cA'$ and $n$, we can compute the  weight of a minimal run that starts at some state $s \in S$, ends at some state $t \in T$, and has length $n$.
    \end{clarep}
    \begin{claimproof}
        Let $M \in \bbZinf^{\abs{Q'} \times \abs{Q'}}$ be the transition matrix of $\cA'$.
        In polynomial time with respect to $M$ and $n$, we can compute $M^n$.
        Now, we can simply compute $\min\set{M^n(s,t) : s \in S, t \in T}$.
    \end{claimproof}

    \begin{clarep}[Also follows from~\cite{karp1978characterization}]\label{clm:compute-cycle-lengths}
        Let $q \in Q$.
        We can compute, in polynomial time from $\cA$, the length of the $q$-cycle with the minimal average weight among all short $q$-cycles.
    \end{clarep}
    \begin{claimproof}
        Using~\cref{clm:compute-an} with $\cA' = \cA$, $S = \set{q}$, and $T = \set{q}$, we can compute the minimal weight $q$-cycle of length $n$ in polynomial time from $\cA$ and $n$.
        In particular, assuming $M$ is the transition matrix of $\cA$, it is true that $\frac{M^n(q,q)}{n}$ is the minimal average weight of a $q$-cycle of length $n$. 
        Since we are only concerned with short cycles, i.e. when $n \leq \abs{Q}$, we can identify the length of the $q$-cycle with minimal average weight among all short $q$-cycles by computing:
        \begin{equation*}
            \arg \min_{1 \leq n \leq \abs{Q}} \set{ \frac{M^n(q,q)}{n} }.
        \end{equation*}
    \end{claimproof}

    \begin{clarep} \label{clm:compute-fq}
        $f_q(n)$ can be computed in polynomial time with respect to $\cA$ and $n$.\qedhere
    \end{clarep}
    \begin{claimproof}
        There are two cases to consider: $n < 2\abs{Q}^2$ and $n \geq 2\abs{Q}^2$.

        \proofsubparagraph*{Case 1:} $n < 2\abs{Q}^2$.
        Since $\Pi_q$ contains all accepting runs of length less than $2\abs{Q}^2$ (regardless of $q$), we know that $f_q(n)$ is just the weight of a minimal accepting path of length $n$ (from any initial state, to any accepting state of $\cA$).
        For this, we use~\cref{clm:compute-an} with $\cA' = \cA$, $S = Q_0$, and $T = F$.

        \proofsubparagraph*{Case 2:} $n \geq 2\abs{Q}^2$.
        Recall that every run in $\Pi_q$ is of the form $\tau \gamma_q^x \tau'$ with $\length(\tau) + \length(\tau') < 2\abs{Q}^2$.
        That is, at most $2|Q|^2-1$ of the length of such runs do not come from iterations of~$\gamma_q$. 
        Our approach to compute $f_q$ is to construct a unary WFA $\cA'$ that tracks runs in $\Pi_q$ by verifying two conditions: (i) the length of the non-$\gamma_q$ part of the run is at most $2|Q|^2-1$ and (ii) $q$ is visited along the run. 

        \newcommand{\qpre}{\textup{pre}}
        \newcommand{\qpost}{\textup{post}}
        \newcommand{\qloop}{q_{\gamma,0}}
        Intuitively, $\cA'$ tracks in its states the following information:
        \begin{itemize}
            \item the current state of $\cA$,
            \item a counter ranging in $\{0,\ldots,2|Q|^2-1\}$ tracking the number of steps taken outside $\gamma_q$, and 
            \item an flag in $\{\qpre,\qpost\}$ indicating whether $q$ has already visited.
        \end{itemize}
        In addition, once an accepting state is reached with the $\qpost$ flag, $\cA'$ allows iterating $\gamma_q$ on a new designated accepting state. Note that this location is arbitrary; we could iterate this cycle at any other point, provided we ensure $q$ is visited.

        Thus, $\cA'$ assigns to $n$ the minimal value of a run of length $n$ in $\Pi_q$ that visits $q$, which (for runs of length at least $2|Q|^2$) is exactly $f_q(n)$.
        We proceed with the formal construction.

        The state space of $\cA'$ is $(Q\times \{0,1,\ldots,2|Q|^2-1\}\times\{\qpre,\qpost\})\cup \{q_{\gamma,0},\ldots,q_{\gamma,m}\}$. The initial states of $\cA'$ are $Q_0 \times \{0\} \times \set{\qpre}$.
        The transitions are as follows: for every transition $(s,c,t)\in \Delta$ of $\cA$ and for $0\le i<2|Q|^2-1$, we add the transitions $((s,i,\qpre),c,(t,i+1,\qpre))$ and $((s,i,\qpost),c,(t,i+1,\qpost))$ to $\cA'$. 
        Additionally, if $t \in F$ we also add the transition $((s,i,\qpost),c,\qloop)$ to $\cA'$.
        For transitions $(s,c,q) \in \Delta$, we add the transitions $((s,i,\qpre),c,(q,i+1,\qpost))$ and $((s,i,\qpre),c,\qloop)$ if $q\in F$. 

        In addition, the states $\{\qloop,\ldots,q_{\gamma,m}\}$ are a copy of the $\gamma_q$ cycle, including its weights.
        The only accepting state of $\cA'$ is $\qloop$.

        Thus, $\cA'$ proceeds for at most $2|Q|^2-1$ steps, transitioning from $\qpre$ to $\qpost$ upon visiting $q$, and after such a transition, reaching $\qloop$ if an accepting state is reachable. From there, the cycle $\gamma_q$ can be taken arbitrarily, and the run is accepting if and only if it traverses this cycle (in its entirety) any number of times.

        In particular, for every $n\ge 2|Q|^2$, $\cA'(n)$ is the minimal weight of a run that uses at most $2|Q|^2-1$ states outside $\gamma_q$, visits $q$ at least once, and reaches and accepting state, and possibly has any number of repetitions of $\gamma_q$. Hence $\cA'(n)=f_q(n)$.
        To conclude, since $\cA'$ can easily be constructed in polynomial time from $\cA$, we can again use~\cref{clm:compute-an} to compute $f_q(n)$ with $\cA' = \cA$, $S = Q_0 \times \set{0} \times \set{\qpre}$, and $T = \set{\qloop}$.
        \end{claimproof}
\end{proof}

\subsection{Extending Chrobak's Normal Form to Unary WFAs} 
\label{sec:cnf}

We will now strengthen~\cref{thm:deterministic-union} by proving that there exists a quadratic size unary WFA in Chrobak normal form that is equivalent to $\cA$ (which can be computed in polynomial time). 
To show this we consider the  the deterministic unary WFAs  $\cB_1, \ldots, \cB_k$ output by~\cref{thm:deterministic-union}, and observe that the stem of each $\cB_i$ is identical, meaning their stem can be merged, branching at the end of the stem to the cycle of the relevant $\cB_i$.
We state and prove this formally in the following corollary, for which we recall that we fixed our attention to a trim unary WFA $\cA = \tup{Q, Q_0, \Delta , F}$.

\begin{corollary}\label{cor:normal-form}
    There exists a non-deterministic unary WFA $\cB$ such that
    \begin{enumerate}[(1)]
        \item \label{itm:chrobak-correctness}
        for every $n$, $\cB(n) = \cA(n)$;
        \item \label{itm:chrobak-normal-form}
        the underlying unary NFA of $\cB$ is in Chrobak normal form;
        \item \label{itm:chrobak-size}
        $\cB$ consists of at most $3\abs{Q}^2$ states and transitions, in particular:
        \begin{enumerate}[(i)]
            \item \label{itm:chrobak-path-length}
            the initial path of $\cB$ has less than $2\abs{Q}^2$ states and
            \item \label{itm:chrobak-cycle-count}
            there are at most $\abs{Q}$ many cycles in $\cB$, each of length at most $\abs{Q}$;
        \end{enumerate}
        \item \label{itm:chrobak-updates}
        $\norm{\cB} \leq 8\abs{Q}^2\norm{\cA}$; and
        \item \label{itm:chrobak-poly-time}
        $\cB$ can be computed in polynomial time.
    \end{enumerate}
\end{corollary}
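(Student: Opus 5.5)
The plan is to merge the stems of the deterministic automata $\cB_q$ from \cref{thm:deterministic-union}. The key observation is that $f_q(n)$ does not depend on $q$ for $n < 2\abs{Q}^2$. Indeed, $\Pi_q$ contains every accepting run of length less than $2\abs{Q}^2$, and by \cref{lem:linear-form-runs} the minimum over such runs is exactly $\cA(n)$. Hence $f_q(n) = \cA(n)$ for all $q$ and all such $n$. (I would state this as a claim, the \cref{clm:same-paths} referred to earlier.)

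Consequently, the first $2\abs{Q}^2-1$ transitions and the acceptance status of the first $2\abs{Q}^2-1$ states can be chosen identically in every $\cB_q$. I would rebuild each $\cB_q$ slightly differently from the construction after \cref{clm:perioid-weights}: cut the lasso at position $N \coloneqq 2\abs{Q}^2-1$ rather than where the cycle $\beta_q$ currently starts. By \cref{clm:perioid-weights}, $f_q$ is $\length(\gamma_q)$-periodic (with additive shift $\weight(\gamma_q)$) from $2\abs{Q}^2$ onwards. So $\cB_q$ can be taken as a common path $p_0 \to \dots \to p_N$ encoding $\cA(n)$ for $n \le N$, followed by a cycle of exactly $\length(\gamma_q)\le\abs{Q}$ states entered from $p_N$.

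On this cycle we again use the telescoping weight trick. The weight of the transition out of the last accepting position (on the path, or earlier on the cycle) to the next accepting position is the difference of the $f_q$-values. The single branching transition out of $p_N$ then carries the difference $f_q(m) - f_q(m')$, where $m'$ is the last accepted length $\le N$ and $m$ is the first accepted length $> N$. It may happen that no length $\le N$ is accepted along the stem; then the stem weights are all $0$ and the branching transition absorbs $f_q(m)$ directly. Correctness of each branch, $\cB$'s run through cycle $q$ giving $f_q(n)$, follows exactly as in \cref{clm:deterministic-weights}, using \cref{clm:cycle-weight}-style telescoping over one period. Item~\ref{itm:chrobak-correctness} then follows from \cref{clm:correctness-of-union}.

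The main obstacle is a subtlety in using a common stem. Weights along a shared stem must be the same for all branches, yet the branch for $q$ must produce $f_q(n)$ for $n > N$. This works precisely because the stem weights only encode values $\cA(n)=f_q(n)$ for $n\le N$, while all $q$-specific information is placed in the branching edge and the cycle. We must also check that stem states being accepting does not inject spurious values: a run stopping on the stem has length $\le N$, where all branches agree with $\cA$.

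The remaining items are bookkeeping. There are fewer than $2\abs{Q}^2$ stem states and at most $\abs{Q}$ cycles of length at most $\abs{Q}$, giving at most $3\abs{Q}^2$ states and transitions. The weight bound repeats \cref{clm:size-of-b}, since all involved lengths are below $4\abs{Q}^2$. Polynomial-time computability uses \cref{clm:compute-an,clm:compute-cycle-lengths,clm:compute-fq}, evaluating $f_q$ only on lengths below $2\abs{Q}^2+2\abs{Q}$.
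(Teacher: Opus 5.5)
Your proposal is correct and follows the paper's route. You use $f_q(n)=\cA(n)$ for all $n<2\abs{Q}^2$ (the paper's \cref{clm:same-paths}) to share one stem among the lassos $\cB_q$, then branch at its end into cycles of length $\length(\gamma_q)$. You do not need \cref{lem:linear-form-runs} for that equality: $\Pi_q$ consists of accepting runs and contains all of them below that length.

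On one point you are more careful than the paper. The paper merges the stems of the $\cB_q$ exactly as built for \cref{thm:deterministic-union} and asserts that they carry identical weights. In that construction, however, the transition leaving the last accepting stem position $n_{i-1}$ carries $f_q(n_i)-f_q(n_{i-1})$ with $n_i\ge 2\abs{Q}^2$. This weight depends on $q$, and it lies strictly inside the stem whenever $n_{i-1}$ is not the final stem position. For example, take $\cA$ to be the disjoint union of two $3$-cycles of total weight $0$ and $3$, each with one state that is both initial and accepting. The edge out of position $69$ then gets weight $0$ in one $\cB_q$ and $72$ in the other. Your fix is exactly what makes the merge sound: zeros on the stem after $p_{m'}$, and the difference $f_q(m)-f_q(m')$ on the branching edge. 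Just phrase your telescoping sentence about ``the transition out of the last accepting position (on the path \dots)'' so that it does not also place this difference on the stem edge leaving $p_{m'}$ when $m'<N$.

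There is one small slip. The path $p_0,\dots,p_N$ with $N=2\abs{Q}^2-1$ has exactly $2\abs{Q}^2$ states, so item (3)(i) (``less than $2\abs{Q}^2$'') is missed by one. You cannot branch one step earlier using $f_q$ itself, because \cref{clm:perioid-weights} fails at $n=2\abs{Q}^2-1$: a run of $\Pi_q$ of that length need not pass through $q$, so $\gamma_q$ cannot be inserted. If you want the exact bound, let branch $q$ encode the minimum over runs $\tau\gamma_q^x\tau'$ that do pass through $q$. This function is periodic from $2\abs{Q}^2-1$ on, and its minimum over $q$ is still $\cA(n)$, since the run built in the proof of \cref{lem:linear-form-runs} passes through $q$. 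The overall bound of $3\abs{Q}^2$ states is unaffected either way.
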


\begin{proof}
    Suppose $\cB_1, \ldots, \cB_k$ are the deterministic unary WFAs output by~\cref{thm:deterministic-union}. 
    Observe that in the construction, the initial paths (the first $2|Q|^2-1$ states and transitions) of every $\cB_i$ have the same weights and the same accepting states.
    This is formalised in the following claim and proof.
    
    \begin{claim}\label{clm:same-paths}
        For every $p, q \in Q$ and every $n < 2\abs{Q}^2$, $f_p(n) = f_q(n)$.
    \end{claim}
    \begin{claimproof}
        This follow from the fact that, for every $q \in Q$, $\Pi_q$ contains all runs of length less than $2\abs{Q}^2$ that start in any initial state, and that end in any final state.
        Thus, for every $p, q \in Q$, $\set{\pi \in \Pi_p : \length(\pi) < 2\abs{Q}^2} = \set{\pi \in \Pi_q : \length(\pi) < 2\abs{Q}^2}$.
        Hence $f_p(n) = f_q(n)$, for every $n < 2\abs{Q}^2$.
    \end{claimproof}
    
    To add some detail, recall that the transition updates in the initial path of $\cB_q$ are either $t[i] = f_q(n_i) - f_q(n_{i-1})$ or $0$ (depending only on whether the transition was outgoing from an accepting state).
    Given that the accepting states in the initial path of $\cB_q$ only depend on whether $\cA$ has an accepting run of length $n$ (for $n < 2\abs{Q}^2)$), thanks to~\cref{clm:same-paths}, it follows that the $n$-th transition of each $\cB_q$ has the same weight.  
    
    Accordingly, it is possible to merge the first $2\abs{Q}^2 - 1$ states and transitions of $\cB_1, \ldots, \cB_k$ to obtain $\cB$.
    After the $(2\abs{Q}^2-1)$-st transition, there will then be a non-deterministic branch to each of the cycles from $\cB_1, \ldots, \cB_k$.
    It is immediate that $\cB$ satisfies the structural conditions~\ref{itm:chrobak-normal-form},~\ref{itm:chrobak-path-length}, and~\ref{itm:chrobak-cycle-count} of~\cref{cor:normal-form}.
    The remaining conditions come directly from~\cref{thm:deterministic-union}. 
    The first condition follows from condition~\ref{itm:union-correctness} of~\cref{thm:deterministic-union}, the fourth follows from condition~\ref{itm:union-updates} of~\cref{thm:deterministic-union}, and the fifth follows from condition~\ref{itm:union-poly-time} of~\cref{thm:deterministic-union}.
\end{proof}

\section{Register Minimisation of CRAs}\label{sec:regmin}

\appendixlabel{appendix:regmin}
\begin{toappendix}
\label{appendix:regmin}
\end{toappendix}

Tropical cost register automata (CRA) are deterministic automata augmented with registers taking values in $\mathbb{Z}$, denoted by $k$-CRA when there are $k$ registers. Each transition specifies how the register values are updated linearly (within the tropical semiring) with respect to the other transitions. Intuitively, register updates take the following form
$r_i \xleftarrow{} \min\{r_1 + m_{1,i}, r_2 + m_{2,i},\ldots, r_k + m_{k,i}\}$, where $r_i's$ are registers and $m_{i,j}$ are constants. Note that since CRAs are deterministic, then over a unary alphabet they are lasso-shaped. The complexity of their behaviour stems from the register updates.
A formal definition of CRAs can be found in the appendix.

\begin{toappendix}
We start by formally defining CRAs. 
    A \emph{Cost Register Automaton with linear register updates} of dimension $k$ ($k$-CRA) is a tuple
$\cN = \tup{Q,\Sigma,\delta,q_0,F,\upd,\fin}$
with the following components:
\begin{itemize} 
    \item $\tup{Q,\Sigma,\delta,q_0,F}$ is a deterministic finite automaton (DFA), with $\delta:Q\times \Sigma\to Q$ a deterministic transition function.
    \item For each state $q\in Q$ and letter $\sigma\in \Sigma$, the \emph{update}
    $\upd(q,\sigma)$ is a $k\times k$ matrix over $\bbZinf$. We refer to its entries as $\upd(q,\sigma)_{i,j}$ for $1\le i,j\le k$.
    \item For each $q\in F$, the \emph{output registers} are $\fin(q)\subseteq [k]$.
\end{itemize}

We turn to define the semantics of CRAs.
Recall that all matrix products are in the $(\bbZinf,\min,+)$ semiring.  
A \emph{valuation} of the $k$ registers is a row vector $\vec{r}\in \bbZinf^k$.
Given such a valuation and a transition update $\upd(q,\sigma)$, the \emph{next valuation} is $\vec{r'}=\vec{r}\cdot \upd(q,\sigma)$.
Denote $M=\upd(q,\sigma)$ with entries $m_{i,j}$. Unfolding tropical matrix multiplication gives the following register update:
\[
    r_i' = \min\{r_1 + m_{1,i}, r_2 + m_{2,i},\ldots, r_k + m_{k,i}\}
\]

Consider a word $w=\sigma_1\cdots \sigma_n$ and the unique run
\[
    \rho=q_0\xrightarrow{\sigma_1}q_1\xrightarrow{\sigma_2}\cdots\xrightarrow{\sigma_n}q_n
\]
of the DFA part of $\cN$ on $w$.
Then $\rho$ induces a sequence of valuations $\vec{r^0},\ldots,\vec{r^n}$ given by the zero vector $\vec{r^0}=\vec{0}$, and for every $0\le i< n$ we have $\vec{r^{i+1}}=\vec{r^i}\cdot \upd(q_i,\sigma_{i+1})$. 
If $q_n\in F$, the value assigned by $\cN$ to $w$ is $\min\{\vec{r^n}(i)\mid i\in \fin(q_n)\}$.
If $q_n\notin F$ the value is $\infty$.
\end{toappendix}

A natural question is whether their representations can be simplified by reducing the number of registers. This leads to the notion of \emph{CRA Dimension Reduction}: given a $k$-CRA $\cA$ and an integer $k' < k$, does there exist an equivalent $k'$-CRA $\cB$? In this section we will show that the problem is in \coNP (and later, that it is \coNP-complete in \cref{thm:detcoNp-complete}). Note that in the general alphabet case, this problem is undecidable~\cite{almagor2025unambiguisability}.

To study this problem, we exploit the fundamental result that CRA are expressively equivalent to WFA~\cite{alur2013regular}. In one direction, the idea is that a WFA can non-deterministically track each register of a CRA using its states. For the converse, a CRA can simulate a non-deterministic WFA by keeping a register for each state of the WFA, and updating the registers according to the transitions of the WFA.

Crucially, this equivalence can be refined: the number of registers corresponds exactly to the width of the WFA. The following is proved in~\cite{almagor2025unambiguisability}.
\begin{theorem}[\cite{almagor2025unambiguisability}]
    \label{apx:thm:k CRA equivalent to k width WFA}
    The class of functions representable by a $k$-CRA is exactly that representable by width $k$ WFAs (and the translation between the two is effective).
\end{theorem}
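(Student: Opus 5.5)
Both inclusions will be shown by explicit, effective translations. In each direction the number of registers of the CRA will match the width of the WFA. Throughout, both models are read as functions on words over a common alphabet $\Sigma$. Initial and final weights are handled as in \cref{rmk: initial and final weights}, which only affects the empty word.

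\emph{From a width-$k$ WFA to a $k$-CRA.} Let $\cA=\tup{Q,Q_0,\Delta,F}$ have width $k$. The CRA runs the subset construction of $\cA$. Its control state after reading $w$ is the set $S_w\subseteq Q$ of states reachable on $w$ via finite-weight runs, and $|S_w|\le k$ by the definition of width. For each reachable subset $S$ we fix an injective labelling $\lambda_S:S\to[k]$. Register $\lambda_{S_w}(q)$ holds $\minweight(Q_0\runsto{w}q)$. On reading $\sigma$ from $S$ to $S'$, the update matrix has entry $(\lambda_S(p),\lambda_{S'}(q))$ equal to the weight of the $\sigma$-transition $p\to q$, and $\infty$ for unused indices. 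The tropical product then implements $r_{q}'=\min_p\{r_p+c(p,\sigma,q)\}$. The initial valuation must be $0$ on the registers of $Q_0$ and $\infty$ elsewhere, whereas the CRA starts from $\vec 0$. We absorb this in the first update by giving the update out of the initial control state rows only for the registers labelling $Q_0$. Accepting control states are the $S$ with $S\cap F\neq\emptyset$, with $\fin(S)=\lambda_S(S\cap F)$. Correctness follows by induction on $|w|$, with invariant: register $\lambda_{S_w}(q)$ equals $\minweight(Q_0\runsto{w}q)$.

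\emph{From a $k$-CRA to a width-$k$ WFA.} Given $\cN=\tup{Q,\Sigma,\delta,q_0,F,\upd,\fin}$, we build a WFA with states $Q\times[k]$, plus a fresh initial layer that handles $\vec r^0=\vec 0$. A state $(q,i)$ means that the run is in control state $q$ and is currently following register $i$. For each $q\xrightarrow{\sigma}q'$ and each $i,j$ with $\upd(q,\sigma)_{i,j}<\infty$, we add a $\sigma$-transition $(q,i)\to(q',j)$ of that weight. The initial states are $\{q_0\}\times[k]$ and the accepting states are $\{(q,i): q\in F, i\in\fin(q)\}$. A routine induction shows that the minimal weight of a run reaching $(q_w,i)$ equals $\vec r^{|w|}(i)$. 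Every state reachable on $w$ has first component $q_w=\delta(q_0,w)$, so at most $k$ states are simultaneously reachable and the width is at most $k$.

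The main subtlety is the first direction. The labellings $\lambda_S$ must be chosen consistently so that each transition's update is a fixed matrix depending only on $(S,\sigma)$, which holds because $S'$ is determined by $S$ and $\sigma$. Trimness and the empty word need brief care. Both constructions are clearly effective, and the first is exponential only through the subset construction, which is inherent.
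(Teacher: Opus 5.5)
Your proposal is correct and follows the simulation the paper sketches; the paper cites this result rather than proving it. The WFA tracks a register through states $(q,i)$, and the CRA runs the subset construction with one register per currently reachable WFA state, where your injective labellings $\lambda_S$ are exactly the refinement that brings the register count down to the width. The ``fresh initial layer'' in your second direction is unnecessary, since taking $\{q_0\}\times[k]$ as initial states already realises $\vec r^0=\vec 0$.
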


By this theorem, CRA Dimension Reduction is equivalent to the \emph{WFA Width Reduction} problem which asks, given a $k$-width WFA $\cA$ and $k'<k$,
does there exist an equivalent $k'$-width WFA $\cB$? Both problems generalise the determinisation problem which can be stated as asking whether there is an equivalent $1$-width WFA or a $1$-CRA.

Since the two problems are equivalent, it suffices to work entirely in the WFA setting, which we do throughout to prove the following:

\begin{theorem}
    \label{thm:CRA minimisation is}
    The CRA Dimension Reduction  and the unary WFA Width Reduction problems are in \coNP.
\end{theorem}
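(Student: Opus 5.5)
Via \cref{apx:thm:k CRA equivalent to k width WFA}, CRA Dimension Reduction reduces to unary WFA Width Reduction (the translation being effective, and I will assume polynomial over a unary alphabet). So it suffices to show that deciding whether a unary WFA $\cA$ has \emph{no} equivalent $k'$-width WFA is in \NP. First apply \cref{cor:normal-form} to obtain, in polynomial time, an equivalent automaton in Chrobak normal form: a stem of length less than $2\abs{Q}^2$ followed by deterministic lassos $\cB_1,\dots,\cB_k$ with $k\le\abs{Q}$. Each $\cB_i$ has cycle length $\ell_i\le\abs{Q}$ and cycle weight $w_i$. Beyond the stem, $\cA(n)=\min_i \cB_i(n)$, and $\cB_i(n)=c_i(n \bmod \ell_i)+ n\,w_i/\ell_i$, where $c_i$ is periodic with period $\ell_i$ and takes value $\infty$ on residues that are not accepting.

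Next I would characterise when width $k'$ suffices. For each residue class $r$ modulo $L=\mathrm{lcm}(\ell_1,\dots,\ell_k)$, the set of components $i$ with $c_i(r\bmod \ell_i)<\infty$ is fixed, and among them only those of minimal slope $w_i/\ell_i$ matter asymptotically; call this set the \emph{active} set $S_r$. The components of minimal slope in $S_r$ differ from one another by bounded periodic offsets, so they can be merged into a single deterministic lasso of period $L$. Components of larger slope are eventually dominated, but a dominated component still forces an extra register as long as it is the minimum on some other residue class. My conjectured criterion is therefore: $\cA$ is $k'$-width representable if and only if, for every residue $r$, the number of \emph{distinct minimal slopes} needed to realise $\cA$ is at most $k'$. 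Equivalently, some residue $r$ has $k'+1$ components with pairwise distinct slopes, all of which are forced to be tracked simultaneously because each of them is the minimum on a residue class reachable from the current configuration. For determinisation ($k'=1$) this should reduce to the statement that a single residue class is active in two components with different slopes.

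The certificate for non-representability is then a residue $r<L$ together with $k'+1$ indices. Although $L$ can be exponential, $r$ has polynomially many bits, and checking which residues $r \bmod \ell_i$ are accepting, and comparing the slopes, takes polynomial time. This gives the \NP\ bound for the complement, i.e.\ membership in \coNP.

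The main obstacle is proving the characterisation. Sufficiency requires constructing, from the normal form, a $k'$-width WFA with period $L$. That automaton may be exponentially large, which is harmless since only existence matters. Necessity requires a pumping argument: a $k'$-width WFA can maintain only $k'$ asymptotically distinct linear trajectories along the runs reaching the same length, so $k'+1$ simultaneously needed distinct slopes lead to a contradiction. To carry this out I would pass to the CRA view, consider lengths $n=r+mL'$ with $L'$ a common multiple of $L$ and the CRA's period, and compare register growth rates as $m\to\infty$.
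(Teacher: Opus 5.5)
\textbf{There is a genuine gap: the characterisation you propose is the wrong one, and your certificate checks the wrong condition.} Which slopes must be tracked is not a property of any single residue class. On one residue class only the minimal active slope matters; a higher-slope component active on the same class is just dominated. Conversely, components of different slopes active on \emph{disjoint} residue classes cannot share a register, because in the unary setting every residue class lies in the future of every length. Your side remark that a dominated component matters when it is the minimum on \emph{another} class points the right way. But the criterion you actually certify, and its $k'=1$ specialisation, is co-activity of distinct-slope components on one residue, and that fails in both directions.

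\emph{Co-activity without non-determinisability.} Take $\cB_1$ a one-state accepting loop of weight $0$ and $\cB_2$ a one-state accepting loop of weight $1$. The only residue class is active in two components of different slopes, yet $\cA(n)=\min(0,n)=0$ is deterministic.

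\emph{Non-determinisability without co-activity.} Take $\cB_1$ a weight-$0$ $2$-cycle accepting even lengths, and $\cB_2$ a $2$-cycle with both weights $1$ accepting odd lengths. No residue is active in two components, yet $\cA(2m)=0$ and $\cA(2m+1)=2m+1$. No deterministic lasso realises this, since its values on an infinite domain are $n\beta+O(1)$ for a single slope $\beta$.

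The correct invariant, which is the paper's, is as follows. Group the components by slope into $\cC_1,\dots,\cC_{k'}$ with $\alpha_1<\dots<\alpha_{k'}$. Count the slopes $\alpha_i$ that are the minimal active slope on infinitely many lengths, i.e.\ those for which $\dom(\cC_i)\setminus\bigcup_{j<i}\dom(\cC_j)$ is infinite. The minimal width equals this count. A certificate of non-representability therefore needs $k'+1$ \emph{separate} witnesses, one per slope: for each chosen $i$, a length $n_i$ beyond the stem at which $\cC_i$ accepts and every $\cC_j$ with $j<i$ rejects. By periodicity modulo $L$, $n_i$ can be taken with polynomially many bits, and checking it is polynomial. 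This is what the paper obtains from the \coNP finiteness test for differences of unary regular languages. A single residue with $k'+1$ indices cannot certify that each index is minimal somewhere.

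With this invariant, the rest of your plan goes through. Merging equal-slope components into one deterministic lasso gives the upper bound; the paper instead gets determinisability of each group from bounded gaps, which is equivalent. Your growth-rate argument is essentially the paper's lower bound: for the $k$ witnessing residue classes, the length-$x$ prefixes of runs on $x+\mu_j$ have pairwise distinct weights, so they must end in distinct states.
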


\subparagraph*{Proof overview.}
To prove Theorem~\ref{thm:CRA minimisation is}, we develop a procedure
that computes the minimal width of a unary WFA \(\cA\).
Our procedure goes through the three following steps.
\begin{enumerate}
    \item
    First, we decompose \(\cA\) in polynomial time
    into a family of deterministic unary WFA \((\cB_i)_{i=1}^{k''}\).
    \item 
    Next, we group in polynomial time all components \(\cB_i\) that share the same asymptotic behaviours,
    which yields a smaller family of unary WFAs \((\cC_i)_{i=1}^{k'}\).
    \item
    Finally, we identify and discard \emph{redundant} components \(\cC_i\),
    namely those whose domain is eventually covered by other components with a smaller average cycle weight.
    We show that we can establish whether a component is redundant in \coNP (Proposition~\ref{prop:coNP}).
\end{enumerate}
We then prove that the number of non-redundant components obtained in the last step
is exactly the minimal width of any unary WFA equivalent to \(\cA\)
(Propositions~\ref{prop:high} and~\ref{prop:low}).

\subparagraph*{Complexity.} Note that the first two steps can be carried out in polynomial time,
and that redundancy of a given component \(\cC_i\) can be decided in \coNP,
that is, if \(\cC_i\) is not redundant, then it has a polynomial size non-redundancy witness. 

To solve the unary WFA Width Reduction problem,
it suffices to decide whether there exists
a \(k'\)-width WFA equivalent to \(\cA\),
without requiring minimality of \(k'\).
Our procedure yields a \coNP algorithm:
a certificate of non-existence of a \(k'\)-width unary WFA
consists of \(k'+1\) non-redundancy witnesses
for distinct components.

Let us now address the unary WFA Width Minimisation problem:
given a $k$-width unary WFA $\cA$ and an integer $k' < k$,
is \(k'\) the \emph{minimal} integer such that
there exists a $k'$-width unary WFA equivalent to \(\cA\)?
Formally, let \(\opt(\cA)\) be the minimal value
such that there exists an equivalent \(\opt(\cA)\)-width 
unary WFA.
Our procedure yields a \(\class{D^P}\) algorithm,
where \(\class{D^P}\) is the class of languages
that are the intersection of a language in \coNP
and a language in \NP
(see~\cite[Section 2]{PapadimitriouY84}).
Indeed, we can decide in \coNP
whether $k' \ge \opt(\cA)$,
and dually in \NP whether \(k' - 1 < \opt(\cA)\).
The intersection of these two conditions characterises
\(k' = \opt(\cA)\), which concludes the argument.

\subsection{The Procedure}
For the rest of the section we fix a unary WFA \(\cA\),
and we now detail each step of the procedure, before proving its correctness.

\subparagraph*{Step 1: Decomposition.}
We begin by applying Theorem~\ref{thm:deterministic-union}
to decompose \(\cA\) in polynomial time
into a family of deterministic unary WFAs \((\cB_i)_{i=1}^{k''}\) such that
\begin{equation}\label{equ:Bdecomposition}
    \cA(n) = \min_{1 \leq i \leq k''} \set{\cB_i(n)} \text{ for all } n \in \mathbb{N}.
\end{equation}

\subparagraph*{Step 2: Grouping components.}
We now group together the components \(\cB_i\)
that share the same asymptotic behaviour,
that is, the same average cycle weight.
Formally, let \(\alpha_1 < \alpha_2 < \cdots < \alpha_{k'}\)
be the distinct average weights of the cycles of the automata \((\cB_i)_{i=1}^{k''}\).
For each \(1 \leq i \leq k'\), we define \(\cC_i\) as the union
of all components \(\cB_j\) whose cycle has average weight \(\alpha_i\).
Note that the average cycle weights of the components \(\cB_j\) can be computed and compared in polynomial time,
therefore, as we do not determinise \(\cC_i\),
this construction can be done in polynomial time.
Since each \(\cB_j\) appears in one \(\cC_i\), Equation~\eqref{equ:Bdecomposition} yields
\begin{equation}\label{equ:Cdecomposition}
    \cA(n) = \min_{1 \leq i \leq k'} \set{\cC_i(n)} \text{ for all } n \in \mathbb{N}.
\end{equation}
Moreover, 
since all the cycles in \(\cC_i\) have average weight \(\alpha_i\), we  get the following:
\begin{lemrep}\label{lemma:asymptotic}
    There exists \(N \in \mathbb{N}\) such that, for all \(1 \leq i \leq k'\) and all \(n \in \mathbb{N}\),
    every run \(\rho\) of 
    \(\cC_i\) of length \(n\) satisfies
    \(|\weight(\rho) - n \cdot \alpha_i| \leq N\). 
    Moreover $N$ is polynomial in the description of the $\cC_i$.
\end{lemrep}
\begin{proof}
    Fix \(1 \leq i \leq k'\).
    Let \(S_i \in \mathbb{N}\) be the number of states of \(\cC_i\),
    and let \(M_i \in \mathbb{N}\) be the maximal absolute weight of a transition in \(\cC_i\).
    Let \(\rho\) be an accepting run of \(\cC_i\) on some input \(n \in \mathbb{N}\).
    Since \(\cC_i\) is a finite union of deterministic unary WFA,
    \(\rho\) can be decomposed into a simple path \(\rho_1\) of length \(n_1 \leq S_i\)
    followed with a loop \(\rho_2\) of length \(n_2\) obtained by iterating a simple cycle.
    Since each transition has weight at most \(M_i\),
    \(|\weight(\rho_1)| < n_1 \cdot M_i\).
    Furthermore, by definition of \(\cC_i\) the average weight of all cycles in \(\cC_i\) is \(\alpha_i\),
    hence \(\weight(\rho_2) = n_2 \cdot \alpha_i\).
    Therefore,
    \[
        |\weight(\rho) - n \cdot \alpha_i|
        = |\weight(\rho_1) + \weight(\rho_2)  - n \cdot \alpha_i|
        \leq |\weight(\rho_1)| + |\weight(\rho_2) - n \cdot \alpha_i|
        \leq n_1(M_i + \alpha_i).
    \]
    Since \(n_1 \leq S_i\), we get that \(|\weight(\rho) - n \cdot \alpha_i| \leq S_i(M_i + \alpha_i)\).
    Setting \(N = \max_{1 \leq i \leq k'} S_i(N_i + \alpha_i)\) then concludes the proof. Clearly $N$ is polynomial in $\max_{1 \leq i \leq k'} \norm{\cC_i}\cdot S_i$, as required.
\end{proof}

\subparagraph*{Step 3: Discarding redundant components.}
We now formalise which elements of the decomposition \((\cC_i)_{i=1}^{k'}\) are necessary.
A component \(\cC_i\) is called \emph{redundant}
if it is eventually dominated domain-wise by the previous components:
\[
\cC_i \text{ is redundant if } \dom(\cC_i) \setminus \bigcup\limits_{j=1}^{i-1} \dom(\cC_j) \text{ is finite}.
\]
We show that while the words witnessing the non-redundancy might be exponentially large,
we can decide their non-existence in \coNP:
\begin{proposition}\label{prop:coNP}
    We can establish in \coNP whether a given component is redundant. 
\end{proposition}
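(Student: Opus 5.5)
We show that non-redundancy of \(\cC_i\) has a certificate of polynomial size that can be checked in polynomial time; redundancy is then in \coNP. The key observation is that each \(\cC_j\) is a union of lasso-shaped deterministic unary WFAs with polynomially bounded stems and cycles. So each domain \(\dom(\cC_j)\) is a finite union of ultimately periodic sets. Concretely, each lasso \(\cB\) in \(\cC_j\) has a stem of length \(s_\cB\) and a cycle of length \(c_\cB\). For every \(n \geq s_\cB\), membership \(n\in\dom(\cB)\) depends only on the residue \(n \bmod c_\cB\), and the set of accepting residues is given by the accepting states on the cycle.

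Let \(s\) be the maximal stem length over all lasso components appearing in \(\cC_1,\dots,\cC_i\); this is polynomial by \cref{thm:deterministic-union}. The set \(D \coloneqq \dom(\cC_i)\setminus\bigcup_{j<i}\dom(\cC_j)\), restricted to \(n\ge s\), is periodic with period \(L=\operatorname{lcm}\) of all cycle lengths. Hence \(D\) is infinite if and only if it contains some \(n\) with \(s\le n< s+L\).

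The certificate for non-redundancy is a single number \(n\) in this range, together with the identity of one lasso \(\cB\) of \(\cC_i\) accepting \(n\). Although \(L\) may be exponential, \(n\) has polynomially many bits. Checking the certificate takes three steps.
\begin{enumerate}
\item Verify that \(n\ge s\).
\item Compute, for every lasso \(\cB'\) in \(\cC_1,\dots,\cC_i\), the position reached after reading \(n\), namely \(s_{\cB'}+((n-s_{\cB'}) \bmod c_{\cB'})\). This is polynomial-time arithmetic on binary numbers.
\item Check that the chosen \(\cB\) ends in an accepting state while every lasso of \(\cC_1,\dots,\cC_{i-1}\) ends in a non-accepting state.
\end{enumerate}
Then \(n+tL\in D\) for all \(t\in\NN\), so \(D\) is infinite. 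Conversely, if \(D\) is infinite it contains arbitrarily large elements, and reducing such an element modulo \(L\) into the window \([s,s+L)\) preserves membership. This gives completeness of the certificate.

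The main obstacle, and really the only subtle point, is that the witness may need to be exponentially large because \(L\) is exponential. This prevents a naive deterministic search and forces the guess-and-check approach, so we must argue that arithmetic modulo each individual cycle length suffices and that \(L\) is never computed explicitly. A second point to check is that the components \(\cC_j\) are kept as explicit unions of lassos, which Step 2 ensures, since it does not determinise. Therefore the stems and cycle lengths are polynomial by \cref{thm:deterministic-union}, and the accepting sets are read off directly from the states.
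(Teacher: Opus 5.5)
Your proof is correct, but it takes a different route from the paper.

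\textbf{The paper's route.} It gives a one-line reduction. It views $\dom(\cC_i)$ and $\bigcup_{j<i}\dom(\cC_j)$ as languages of unary NFAs, and invokes the known fact that finiteness of $L(\cA)\setminus L(\cB)$ for unary NFAs is decidable in \coNP~\cite[Lemma 7.15]{ChistikovKMP22}.

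\textbf{Your route.} You prove the needed special case directly. You use the fact that Step~2 keeps every $\cC_j$ as an explicit union of lassos, with stems and cycles polynomially bounded by \cref{thm:deterministic-union}. Beyond the maximal stem length $s$, membership in $D$ is periodic modulo $L=\operatorname{lcm}$ of the cycle lengths. This yields a witness $n\in[s,s+L)$ of polynomial bit-size. You check it by computing $s_{\cB'}+((n-s_{\cB'})\bmod c_{\cB'})$ for each lasso, without ever computing $L$. Both the soundness and completeness directions of your certificate are right.

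\textbf{Comparison.} Your argument is essentially what underlies the cited lemma for general unary NFAs, with the Chrobak-style normalisation already supplied for free by Section~3.
\begin{itemize}
\item Your version gives a self-contained, explicit certificate. It also makes clear why exponentially large witnesses are harmless.
\item The paper's version is shorter. It is also independent of the shape of the components: it would work for any NFA description of the domains, without needing the lasso structure or the size bounds of \cref{thm:deterministic-union}.
\end{itemize}

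\textbf{A minor presentational point.} The verifier never needs to check $n<s+L$. It suffices to reject certificates longer than an a priori polynomial bit-length bound, for example $1+\log_2 s+\sum_{\cB'}\log_2 c_{\cB'}$. That bound is exactly what your remark that $n$ has polynomially many bits guarantees.
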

\begin{proof}
A component \(\cC_i\)
is redundant if and only if its domain differs only finitely
from the union of the domains of the preceding components.
Let \(\cA\) be the automaton recognising \(\dom(\cC_i)\),
and let \(B\) be the disjoint union of the automata recognising the domains of all
\(\cC_j\) with \(j<i\).
Then redundancy reduces to checking whether \(L(\cA) \setminus L(\cB)\) is finite.
This is known to be decidable in \coNP~\cite[Lemma 7.15]{ChistikovKMP22}.
\end{proof}

\subsection{Minimal Width Equals the Number of Non-redundant Components}

We now show that the number of non-redundant components is exactly
the minimal width of any unary WFA equivalent to \(\cA\).
Let \(i_1 <  \cdots < i_k\) be the indices of the non-redundant components \(\cC_i\).
We begin with an auxiliary lemma showing that \(\cA\)
is eventually determined by the union of its non-redundant components.

\begin{lemrep}\label{lemma:partition}
    There exist \(N \in \mathbb{N}\) and a partition
    \[
        \{n \in \dom(\cA) \mid n \geq N\} = D_1 \cup \cdots \cup D_k
    \]
    into \(k\) infinite regular languages such that
    \(\cA(n) = \cC_{i_j}(n)\) for all \(1 \leq j \leq k\) and all \(n \in D_j\).
\end{lemrep}
This follows from Lemma~\ref{lemma:asymptotic},
which shows that each component \(\cC_i\)
is asymptotically linear with slope \(\alpha_i\),
together with the fact that the components
are ordered by increasing values of \(\alpha_i\).
The full proof is deferred to the appendix.

\begin{proof}
For each \(1 \leq i \leq k'\), let
\[
    E_i = \dom(\cC_i) \setminus \bigcup_{j<i} \dom(\cC_j).
\]
By definition, \(\cC_i\) is redundant if and only if \(E_i\) is finite.
Let \(N_1 \in \mathbb{N}\) be larger than every element of each finite set \(E_i\).
By Lemma~\ref{lemma:asymptotic}, there exists \(N_2 \in \mathbb{N}\)
such that for all \(1 \leq i \leq k'\) and all \(n \in \dom(\cC_i)\) we have
\(|\cC_i(n) - n \cdot \alpha_i| \leq N_2\).
Let
\[
    N = \max\left(
        N_1,
        2N_2 \cdot \Big\lceil \max_{i<j} \frac{1}{\alpha_j - \alpha_i} \Big\rceil
    \right).
\]
Let \(i_1, \dots, i_k\) be the indices of the non-redundant components, and define
\[
    D_j = \{\, n \in E_{i_j} \mid n \geq N \,\}.
\]
By the choice of \(N_1\), the sets \(D_1,\dots,D_k\) form a partition of
\(\{n \in \dom(\cA) \mid n \geq N\}\), and each \(D_j\) is infinite and regular.
Fix \(1 \leq j \leq k\) and \(n \in D_j\). For all \(i < j\), we have \(n \notin \dom(\cC_i)\), hence \(\cC_i(n) = +\infty\).
Furthermore, for \(i > j\) we obtain
\[
\begin{aligned}
    \cC_i(n)
    &\geq n \cdot \alpha_i - N_2 \\
    &= n \cdot \alpha_j + n(\alpha_i - \alpha_j) - N_2 \\
    &\geq n \cdot \alpha_j + N(\alpha_i - \alpha_j) - N_2  \\
    &\geq n \cdot \alpha_j + N_2 \\
    &\geq \cC_j(n).
\end{aligned}
\]
Thus \(\cC_j(n) \le \cC_i(n)\) for all \(1 \leq i \leq k'\), and since \(\cA\) is equivalent to the union of the \(\cC_i\) (Equation~\ref{equ:Cdecomposition}), it follows that \(\cA(n) = \cC_j(n)\).
\end{proof}
We now construct an automaton of width \(k\) equivalent to \(\cA\).
The key step is to show that each non-redundant component can be determinised.
For this, we rely on a classical
characterisation of determinisability in terms of \emph{$B$-gap witnesses}.

\begin{definition} 
\label{def:b-gapwitness}
For a bound $B\in \bbN$, a \emph{$B$-gap witness} is a pair of words $x,y\in \bbN$ and a state $q\in Q$ such that upon reading $x$, we have $\minweight(Q_0\runsto{x}q)>\minweight(Q_0\runsto{x}Q)+B$, but $\minweight(Q_0\runsto{xy}F)=\minweight(Q_0\runsto{x}q\runsto{y}F)$. 
\end{definition}
The following characterisation is folklore; we use the formulation of
\cite[Theorem~2.1]{almagor2026determinization}.

\begin{theorem}[folklore, {\cite[Theorem 2.1]{almagor2026determinization}}]
\label{thm:abs:det iff gaps}
    $\cA$ is determinisable if and only if there is a $B\in \bbN$ such that there is no $B$-gap witness.
\end{theorem}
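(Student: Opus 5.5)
The plan is to prove the two directions separately: the ``no large gaps $\Rightarrow$ determinisable'' direction by a bounded-offset subset construction, and the converse by a pumping argument against a fixed deterministic automaton. Since the statement is for general (not necessarily unary) WFAs, I would avoid the decomposition of \cref{thm:deterministic-union} and argue directly on $\cA$. Throughout, $\cA$ is trim, so every state reaches $F$.

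\textbf{($\Leftarrow$) No $B$-gap witness implies determinisable.} Fix $B$ with no $B$-gap witness. For a prefix $x$ and a state $p$, define the offset $o_x(p)=\minweight(Q_0\runsto{x}p)-\minweight(Q_0\runsto{x}Q)$. A state $q$ with $o_x(q)>B$ can never be the state through which an optimal run of some $xy$ passes after $x$; otherwise $(x,y,q)$ would be a $B$-gap witness.

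I then build a deterministic WFA $\cD$ whose states are the partial maps $Q\rightharpoonup\{0,\ldots,B\}$. The initial state maps each state of $Q_0$ to $0$. On a letter, $\cD$ relaxes the offsets along the transitions of $\cA$, computes the new minimum $\mu$, emits $\mu$ as the transition weight, renormalises so the minimum offset is $0$, and drops every state whose offset exceeds $B$. A state of $\cD$ is accepting if its domain meets $F$, and the output adds $\min_{p\in F}$ of the offsets. There are finitely many such maps, so $\cD$ is finite and deterministic.

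For correctness I must show that the offsets computed inside $\cD$, which range over kept states only, agree with the true $o_x$. The dangerous case is that the true minimum at $x$ is reached only via a run that passed a discarded state at some earlier prefix $x'$. I would show that this cannot happen: take the minimal run to the current minimum state $p$ at $x$ and extend it, using trimness, by a continuation $z$ that is optimal for $xz$. Passing to an optimal run of $x z$ gives a run that goes through some state at $x'$ that is optimal for the continuation, and that state has true gap $\le B$. I expect this bookkeeping to be the fiddly step. If it breaks, the fallback is to keep every state with gap $\le B$ relative to the true minimum. This can be realised by tracking offsets in $\{0,\ldots,B\}\cup\{\top\}$, where $\top$ means ``far''. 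The point to prove is that states marked $\top$ never re-enter the optimum of any accepted extension; this is exactly the absence of a $B$-gap witness, used for every continuation $y$.

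\textbf{($\Rightarrow$) Determinisable implies no large gaps.} Let $\cD$ be deterministic, equivalent to $\cA$, with $m$ states and maximal weight $W$. Choose $B$ huge in terms of $|Q|$, $m$, $\norm{\cA}$ and $W$. Suppose $(x,y,q)$ is a $B$-gap witness, and let $p$ attain $\minweight(Q_0\runsto{x}Q)$.

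First, I move to a witness in which the gap was created slowly. Along $x$, track the pair of runs formed by the optimal run to $p$ and the optimal run to $q$, together with the state of $\cD$. The gap between the two runs changes by at most $2\norm{\cA}$ per step, so a gap exceeding $B$ forces more than $|Q|^2 m$ positions where the gap takes a new record value. By pigeonhole, two of these positions share the same triple (state of the first run, state of the second run, state of $\cD$), and the segment between them is a cycle on all three components that strictly increases the gap.

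Next, I pump this cycle $r$ times. The resulting prefix still reaches $p$ and $q$, and the $\cD$-state after it is unchanged. The gap now grows linearly in $r$, while the $q$-continuation $y$ still gives an accepted word. Comparing $\cA(x_r y)$ with $\cA(x_r z)$, where $z$ is a short continuation from $p$ to $F$ and likewise from $q$, against $\cD$ yields a contradiction. In $\cD$, the weight of the pumped prefix is common to both words, so the difference $\cA(x_r y)-\cA(x_r z)$ is independent of $r$. In $\cA$, however, optimality of the $q$-branch for $y$ forces this difference to drift linearly in $r$, because the $p$-branch has gained relative weight.

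The main obstacle is making this drift argument rigorous: I must ensure that the $q$-branch remains optimal for $y$ after pumping. I would secure this by choosing the witness so that $y$ is minimal and the pumped cycle lies strictly before the final divergence, and I would handle the remaining cases by a second pigeonhole on $y$.
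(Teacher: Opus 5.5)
The paper does not prove this statement; it cites it as folklore from \cite{almagor2026determinization}. Your argument therefore has to stand on its own, and as written each direction has a genuine gap. Both gaps can be repaired using your own ingredients.

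In $(\Leftarrow)$ your automaton $\cD$ is fine, but the invariant you set out to prove is false. Even when there is no $B$-gap witness, the true minimum after a prefix can be reachable only through states you discarded earlier. Take $Q=\{s,p,q,u\}$, $Q_0=\{s\}$, $F=\{p,u\}$, with transitions $s\xrightarrow{0}p$, $s\xrightarrow{B+1}q$, $p\xrightarrow{B+2}u$, $q\xrightarrow{10}u$, and weight-$0$ self-loops on $q$ and $u$.
\begin{itemize}
\item We have $\cA(1)=0$ and $\cA(n)=B+2$ for $n\ge 2$, and no optimal run visits $q$.
\item The only gap above $B$ is that of $q$ after one letter, so there is no $B$-gap witness (for $B\ge1$).
\item Yet $\cD$ drops $q$ after one letter, while after two letters the true minimum $B+1$ is attained at $q$ alone.
\end{itemize}
So your offsets differ from $o_x$, and the $\top$ fallback does not help, because the lost value is exactly the one that would be needed. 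What you need is a one-sided invariant instead. Write $m_x(r)=\minweight(Q_0\runsto{x}r)$, $\mu_x=\minweight(Q_0\runsto{x}Q)$, and let $\tilde m_x(r)$ be the value $\cD$ attributes to $r$ (output so far plus offset). Each $\tilde m_x(r)$ is the weight of an actual run, so $\tilde m_x\ge m_x$, and hence the internal minimum satisfies $\tilde\mu_x\ge\mu_x$. Now fix an optimal accepting run $\rho$ of $w=xy$ and induct on $x$. If $\rho(x)$ carries its exact value $m_x(\rho(x))$, its computed offset is at most $m_x(\rho(x))-\mu_x\le B$; otherwise $(x,y,\rho(x))$ would be a $B$-gap witness. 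So $\rho(x)$ is kept, and its successor on $\rho$ again receives its exact value. This gives $\cD(w)=\min_{f\in F}\tilde m_w(f)=\cA(w)$, with no agreement of offsets required.

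In $(\Rightarrow)$, write $x=x_1ux_2$, where $u$ is the segment between the two equal triples. Your drift step needs a \emph{lower} bound on $\cA(x_1u^rx_2y)$, that is, that the pumped run through $q$ stays optimal. Nothing in your setup provides this, and choosing $y$ minimal or adding a second pigeonhole does not supply it. Drop optimality altogether and use only upper bounds coming from explicit runs. Combine these with the fact that $\cD$, being deterministic, adds exactly the weight $\delta_\cD$ of its loop for each copy of $u$. Let $\delta_p<\delta_q$ be the weights of $u$ on the runs $\rho_p,\rho_q$, and let $z$ lead from $p$ to $F$.
\begin{itemize}
\item Pumping up along $\rho_p$ gives, for all $r\ge1$, $\cD(xz)+(r-1)\delta_\cD=\cA(x_1u^rx_2z)\le\weight(\rho_p)+(r-1)\delta_p+\minweight(p\runsto{z}F)$. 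Hence $\delta_\cD\le\delta_p$.
\item Deleting $u$ along $\rho_q$, and using the witness property $\cA(xy)=\weight(\rho_q)+\minweight(q\runsto{y}F)$, gives $\cD(xy)-\delta_\cD=\cA(x_1x_2y)\le\cA(xy)-\delta_q$. Hence $\delta_q\le\delta_\cD$.
\end{itemize}
Together these give $\delta_q\le\delta_p$, contradicting the strict increase of the gap along $u$.
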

The determinisation construction that corresponds to the ``if'' direction of \cref{thm:abs:det iff gaps} yields a deterministic equivalent of size polynomial in $B$ (the maximal gap) and single-exponential in the number of states of $\cA$. 
We can now derive the upper bound.
\begin{proprep}\label{prop:high}
    There exists a \(k\)-width unary WFA equivalent to \(\cA\) of size single-exponential in $\cA$.
\end{proprep}

\begin{proof}
We first show that each non-redundant component \(\cC_{i_j}\) is determinisable.
By Lemma~\ref{lemma:asymptotic}, there exists \(N \in \mathbb{N}\)
such that for any two runs of \(\cA\) on the same input word,
the difference of their weights is bounded by \(2N\).
Hence no \(2N\)-gap witness exists,
and by Theorem~\ref{thm:abs:det iff gaps},
each \(\cC_{i_j}\) is determinisable.
Let \(\cD_j\) be a deterministic unary WFA equivalent to \(\cC_{i_j}\).

We now consider the unary WFA \(\cA'\) obtained
as the union of \(\cD_1,\dots,\cD_k\).
Since each \(\cD_i\) is deterministic,
\(\cA'\) has width \(k\).
Moreover, by Lemma~\ref{lemma:partition},
there exists \(N \in \mathbb{N}\)
such that \(\cA(n)=\cA'(n)\)
for all \(n \geq N\).
Thus, \(\cA\) and \(\cA'\) differ only on finitely many inputs.
Since the disagreement is finite, it can be fixed
while preserving the determinism of the components.
We modify a single component, say \(\cD_1\) into \(\cD_1'\),
by equipping its state space with a bounded counter
ranging over \(\{0,\dots,N\}\),
which records the length of the processed input up to \(N\).
For inputs of length at most \(N\),
 \(\cD_1'\) produces the corresponding value of \(\cA\).
For inputs of length greater than \(N\), \(\cD_1'\) coincides with \(\cD_1\),
so no further modification of \(\cD_1\) is needed.
This transformation preserves the determinism of \(\cD_1\),
hence the resulting automaton still has width \(k\) and is equivalent to~\(\cA\).

Finally, from \cref{lemma:asymptotic} we have that $N$ is polynomial in $\norm{\cA}$, i.e., either polynomial in the input (in case $\cA$ is represented in unary) or single-exponential, in case $\cA$ is represented in binary. Either way, the construction of \cref{thm:abs:det iff gaps} yields equivalent automata of size single-exponential in $\cA$.
\end{proof}
We now establish the matching lower bound, showing that no automaton of smaller width
can be equivalent to \(\cA\). The argument exploits the existence of the \(k\)
infinite regular domains \(D_1,\dots,D_k\) where each non-redundant component dominates according to its slope.
We show that these regions enforce \(k\) pairwise distinct
behaviours in any equivalent automaton, which requires width at least \(k\).

\begin{proprep}\label{prop:low}
    Every unary WFA \
    \(\cA'\) equivalent to \(\cA\) has width at least \(k\).
\end{proprep}
\begin{proof}
    From Lemma~\ref{lemma:partition}, there exist infinite regular languages
    \(D_1,\dots,D_k\) partitioning all sufficiently large inputs such that, on each \(D_j\),
    the behaviour of \(\cA\) coincides with that of the non-redundant component \(\cC_{i_j}\).
    Since each \(D_j\) is an infinite unary regular language, it is ultimately periodic:
    there exist \(\lambda_j, \mu_j \in \mathbb{N}\) such that
    \(n\lambda_j + \mu_j \in D_j\) for all \(n \in \mathbb{N}\).
    Let \(\lambda\) be the least common multiple of the \(\lambda_j\).
    By Lemma~\ref{lemma:asymptotic}, there exists \(N \in \mathbb{N}\) such that
    \[
        |\cA(n\lambda + \mu_j) - (n\lambda + \mu_j) \cdot \alpha_j|
        \leq N
    \text{ for all } 1 \leq j\leq k \text{ and } n \in \mathbb{N}.
    \]
    We now fix parameters for a separation argument. Let
    \[
    \mu = \max_{1 \leq j \leq k} \mu_j,\quad
    \alpha = \Big\lceil \max_{1 \leq i \leq k'} \alpha_i \Big\rceil,\quad
    \gamma = \Big\lceil \max_{1 \leq i < j \leq k'} \frac{1}{\alpha_j - \alpha_i} \Big\rceil,
    \]
    and let \(M'\) be the maximal absolute transition weight of \(\cA'\).
    Set
    \[
        x = 2 (N + \mu \cdot (M' + \alpha)) \cdot (\gamma + 1) \cdot \lambda.
    \]
    Since \(\cA'\) is equivalent to \(\cA\), for each \(j\) there is an accepting run
    \(\rho_j\) of \(\cA'\) on input \(x+\mu_j\) such that
    \[
    |\weight(\rho_j) - x \cdot \alpha_j|
    = |\cA(x+\mu_j)  - x \cdot \alpha_j|
    \leq |\cA(x+\mu_j)  - (x+\mu_j) \cdot \alpha_j| + \mu_j \cdot \alpha_j
    \leq N + \mu \cdot \alpha.
    \]
    Let \(\rho_j'\) be the prefix of \(\rho_j\) of length \(x\). Then
    \[
    |\weight(\rho_j') - x \cdot \alpha_j| \leq
    |\weight(\rho_j) - x \cdot \alpha_j| + \mu_j \cdot M'
    \leq N + \mu \cdot (M' + \alpha).
    \]
    Then for any \(1 \leq i < j \leq k\), since \(\alpha_i < \alpha_j\),
    the choice of \(x\) yields
    \[
    \begin{array}{lll}
    \weight(\rho_j') - \weight(\rho_i')
    & \geq &  x \cdot \alpha_j - (N + \mu \cdot (M' + \alpha)) - (x \cdot \alpha_i + N + \mu \cdot (M' + \alpha))\\
    & \geq &  x \cdot (\alpha_j-\alpha_i) - 2(N + \mu \cdot (M' + \alpha))\\
    & > &  2(N + \mu \cdot (M' + \alpha)) \cdot \lambda - 2(N + \mu \cdot (M' + \alpha))\\
    & \geq & 0.
    \end{array}
    \]
    Therefore, the values \(\weight(\rho_1'),\dots,\weight(\rho_k')\) are pairwise distinct,
    hence the runs \(\rho_1',\dots,\rho_k'\), all of length \(x\),
    must end in distinct states. It follows that \(\cA'\) has width at least \(k\).
\end{proof}

\section{Determinisation, Boundedness, and Hardness}

\appendixlabel{appendix:detboundhard}
\begin{toappendix}
\label{appendix:detboundhard}
\end{toappendix}

In this section, we consider the determinisation problem directly and show how it relates with boundedness problems. The determinisation problem is a special case of the register minimisation problem for the case of a single register (see~\cref{apx:thm:k CRA equivalent to k width WFA}).
Accordingly, our decision procedures and \coNP complexity upper bound carries over. 
We will now prove that the problems are both \coNP-hard. Furthermore, recall from~\cref{sec:chrobak} that a unary WFA can be represented as the union of $k$ deterministic automata.
We provide strong evidence that determinisation and boundedness are not fixed parameter tractable when $k$ is the parameter.

First, we show that determinisation is effectively interreducible with value boundedness. 
The \emph{value boundedness} problem is a specific variant of the boundedness problem which requires a sequence of words with increasing value (a single non-accepting word, with weight $\infty$, does not suffice).
This can be seen as the core of the weighted problem as the detection of non-accepting words is purely a reachability criterion.

\begin{definition}\label{defn:boundedness}
A weighted automaton $\cA$ is \emph{bounded} if there exists $B\in \mathbb{N}$ such that  $|\cA(n)| < B$ and \emph{not bounded} otherwise. 
Moreover, $\cA$ is \emph{value bounded} if there exists $B\in \mathbb{N}$ such that if $\cA(n) < \infty$, then $|\cA(n)| < B$, and \emph{not value bounded} otherwise\footnote{We avoid the term \emph{unbounded} to prevent ambiguity.}. 

The boundedness problem asks, given a weighted automaton $\cA$, whether $\cA$ is bounded, and the value boundedness problem asks whether $\cA$ is value bounded. 
\end{definition}

We show that the value boundedness problem is interreducible with the determinisation problem for unary weighted automata. 
Our approach is to show that (i) the connection is almost immediate if the minimal average cycle has weight zero, and (ii) there are reductions that adjust the systems to have minimal average cycle of zero weight suited to each direction of the equivalence. 
As a consequence we will show the following theorem.

\begin{restatable}{theorem}{thmeverythingconpcomplete}
\label{thm:detcoNp-complete}
    The determinisation, boundedness, value boundedness, register reduction, and width reduction problems for unary weighted automata are \coNP-complete.
\end{restatable}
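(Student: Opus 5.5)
The plan is to establish the upper bounds via the machinery of \cref{sec:regmin}, and a single \coNP-hardness reduction that is then transferred to all five problems through polynomial interreductions.

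\textbf{Upper bounds.} Register reduction and width reduction are in \coNP by \cref{thm:CRA minimisation is} and \cref{apx:thm:k CRA equivalent to k width WFA}. Determinisation is the special case of width reduction with $k'=1$, so it is in \coNP too.

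For value boundedness I will use the decomposition directly. By \cref{thm:deterministic-union}, Step~2, and \cref{lemma:partition}, $\cA$ is eventually equal, on each infinite $D_j$, to a component $\cC_{i_j}$ of slope $\alpha_{i_j}$. By \cref{lemma:asymptotic}, $\cA$ is value bounded if and only if every non-redundant component has slope $\alpha_{i_j}=0$. Slopes are computable in polynomial time. A certificate of non-value-boundedness is a non-redundancy witness (from \cref{prop:coNP}) for some component with nonzero slope. Hence value boundedness is in \coNP.

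Boundedness is value boundedness plus a check that $\dom(\cA)$ is cofinite. That check is a unary NFA universality-type question, which is in \coNP (a short non-member certificate exists by Chrobak normal form). So boundedness is in \coNP as well.

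\textbf{Interreductions.} I will follow the plan announced before the theorem, first in the normalised case where the minimal slope $\alpha_1$ is $0$.

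\begin{itemize}
\item \emph{Value boundedness to determinisation.} Given $\cA$, shift weights so the minimal component slope is $0$. Then add a fresh deterministic component, with cycle of slope $0$ and weight $0$, accepting exactly on $\dom(\cA)$ shifted by a marker. A cleaner variant: take $\cA \uplus \cD$, where $\cD$ is a deterministic automaton of constant weight $0$ on a disjoint residue class, and fold in a length-tagged copy of $\cA$ so that non-redundant components appear exactly when $\cA$ has growing values.

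\item \emph{Determinisation to value boundedness.} Map $\cA$ to an automaton whose value is the gap between the two best components. Concretely, a product construction whose value is $\min_i \cC_i(n) - (\text{slope-}\alpha_1\text{ part})$, so that it has bounded values iff there is exactly one non-redundant component.
\end{itemize}

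I expect these gadget constructions to be fiddly. However, for the theorem it suffices to reduce a single \coNP-hard problem to each target problem, and that is what I will actually rely on.

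\textbf{Hardness.} Reduce from unary NFA universality, or equivalently from non-emptiness of the intersection of unary DFAs (cycles of prime lengths $p_1,\dots,p_m$, as in Stockmeyer--Meyer). Given cycles with accepting residues $R_1,\dots,R_m$, the question is whether some $n$ lies in no $R_i$, i.e.\ whether $\bigcup_i R_i$ fails to be universal.

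Build $\cA$ as the union of three kinds of lasso components:
\begin{itemize}
\item components of slope $0$, accepting on the residues $R_i$ modulo $p_i$;
\item one component of slope $1$ accepting every $n$;
\item optionally, one of slope $-1$ accepting nothing (not needed).
\end{itemize}
Then $\cA(n)$ is bounded on $\bigcup R_i$ and grows like $n$ on the complement, which is infinite iff the union is not universal (by periodicity). Hence:
\begin{itemize}
\item $\cA$ is determinisable iff the slope-$1$ component is redundant, by \cref{prop:high} and \cref{prop:low};
\item this holds iff $\cA$ is value bounded;
\item this holds iff $\cA$ is bounded, since the domain is $\bbN$.
\end{itemize}
Width reduction to $k'=1$ (and register reduction, via \cref{apx:thm:k CRA equivalent to k width WFA}) coincides with determinisation. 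So all five problems are \coNP-hard.

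The main obstacle is making sure the hardness instance is polynomial, and that each claimed equivalence follows exactly from \cref{prop:high} and \cref{prop:low} applied to the decomposition. The size bound comes from the prime lengths and binary encoding of residues; the lower bounds themselves are then immediate.
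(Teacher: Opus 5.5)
Your proof works after two small repairs, and it follows a different route from the paper's.

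\textbf{Comparison with the paper.} The paper gets everything from one chain of polynomial reductions: universality $\to$ boundedness (\cref{lem:w1}, using zero-weight copies so that non-members get value $\infty$) $\to$ value boundedness $\to$ determinisation (\cref{lemma:valboundtodet}, using an interleaved zero-weight component and the gap-witness \cref{claim:valbdet,claim:notvalbnotdet}) $\to$ width/register reduction. Membership of all five problems then follows from \cref{thm:CRA minimisation is} alone.

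You do two things differently.
\begin{enumerate}
\item You prove membership of value boundedness directly from the structure of \cref{sec:regmin}: $\cA$ is value bounded iff every non-redundant slope group has slope $0$, and a non-redundancy witness for a nonzero-slope group serves as the \NP certificate for the complement.
\item You get all the hardness results at once from a single gadget, and you read off determinisability from \cref{prop:high,prop:low} rather than from gap witnesses.
\end{enumerate}
Applying those propositions to your hand-built groups is legitimate. The proofs of \cref{lemma:partition,prop:high,prop:low} only use that $\cA=\min_i\cC_i$, where each $\cC_i$ is a union of lassos whose cycles all have slope $\alpha_i$, and $\alpha_1<\alpha_2<\cdots$.

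What each route buys: yours is shorter and gives an explicit certificate. The paper's route also shows that the problems are polynomially interreducible (\cref{lemma:valboundtodet,lemma:dettovalueboundedness}). It also starts from arbitrary unions of unary DFAs, disposing of finite nonempty complements by checking short lengths. Your gadget instead needs the hard instances to be unions of pure cycles, so that $\bigcup_i R_i$ and its complement are each either empty or infinite.

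\textbf{Repair 1: the boundedness check.} Under the paper's definition, boundedness means $\abs{\cA(n)}<B$ for \emph{all} $n$. So it is value boundedness plus $\dom(\cA)=\bbN$, not plus cofiniteness. For example, $\cA(1)=\infty$ and $\cA(n)=0$ otherwise is value bounded with cofinite domain, but it is not bounded. The corrected check is unary NFA universality, which is still in \coNP.

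\textbf{Repair 2: the determinisation equivalence.} \cref{prop:high,prop:low} say that $\cA$ is determinisable iff \emph{at most one} slope group is non-redundant. They do not say it is determinisable iff the slope-$1$ group is redundant. If every $R_i$ is empty, the slope-$0$ group is redundant and $\cA(n)=n$ is deterministic, yet the union is not universal. So your map sends a no-instance to a yes-instance of determinisation. Two fixes work:
\begin{itemize}
\item test emptiness of all $R_i$ first, and in that case output a fixed non-determinisable automaton; or
\item as the paper does, place the instance on even lengths and a zero-weight component on odd lengths, so that the slope-$0$ group is always non-redundant.
\end{itemize}

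\textbf{Smaller points.} Delete the ``Interreductions'' bullets. They are unused, and the first one is wrong: subtracting the minimal slope preserves determinisability but not value boundedness ($\cA(n)=n$ becomes $0$). That is why the paper uses this normalisation only in \cref{lemma:dettovalueboundedness}. The second bullet is not a well-defined WFA construction. For register-reduction hardness, encode $\cA$ as a CRA with one register per state and constant-size control. \cref{apx:thm:k CRA equivalent to k width WFA} is only stated to be effective, not polynomial.
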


\subsection{Automata with Zero-Weight Minimal Average Cycles}

In this section, we will establish the tight connection between value boundedness and determinisation in the case where the minimal average cycle has weight zero. 
We recall \cref{def:b-gapwitness} and \cref{thm:abs:det iff gaps} that tell us that $\cA$ is determinisable if and only if there is no $B$-gap witness for some $B$. 
When the minimal average slope has weight zero, bounding $B$-gap witnesses, and thus determinisability, it turns out to correspond exactly to the value boundedness problem. 

Recall that we assume $\cA$ is trim, so that any run visiting a reachable cycle can be extended into an accepting word. Given a weighted automaton $\cA$, a reachable cycle $\rho$ is of minimal average weight if $\weight(\rho)/\length(\rho) \le \weight(\rho')/\length(\rho')$, for all reachable cycles $\rho'$. Let $\alpha = \weight(\rho)/\length(\rho)$ be the minimal average slope achieved by a cycle in $\cA$.\footnote{\cref{sec:chrobak} considered minimal average (short) $q$-cycles. Here the property is global to the automaton.}

\begin{clarep}[\cite{karp1978characterization}]
    A cycle achieving the minimal average slope $\alpha$ is simple, and therefore has of length less than $|Q|$.
    Moreover, $\alpha$ can be computed in polynomial time.
\end{clarep}
\begin{claimproof}
    The claim is proved in~\cite{karp1978characterization}. We bring its proof for completeness.
    Let $\rho$ be the shortest cycle achieving the minimal average slope, and suppose it starts and ends at $q \in Q$. 
    Suppose for contradiction some state $p\in Q$ in the cycle repeats (other than $q$ at the start and end).
    Consider the cycle $\rho'\subset \rho$ between two occurrences of $p$. 
    If $\rho'$ has smaller average weight than $\rho$, this contradicts the fact that $\rho$ has minimal average weight. 
    If $\rho'$ has the same average weight then $\rho$ can be made shorter by removing $\rho'$, maintaining the same average weight, contradicting that $\rho$ is the shortest. If $\rho'$ has greater average weight then $\rho'$ can also be removed, making the overall average weight lower, contradicting that $\rho$ achieved the minimal average slope. Hence, $\rho$ is simple, and thus has length less than $|Q|$. 
    
    Suppose $M$ is the transition matrix of $\cA$ and observe that $\alpha = \min_{1\le n\le |Q|, q\in Q} \left\{\frac{M^n(q,q)}{n}\right\}$. 
    Just like in the proof of~\cref{clm:compute-cycle-lengths}, we can compute each of these values, of which there are polynomially many, in polynomial time to determine the minimum.
\end{claimproof}

\begin{clarep}
\label{claim:lowerbandperiodicupperb}
Let $\cA'$ be an automaton with minimal average slope of zero.
There exists $B_0\in \bbN$ such that, for all $x\in \bbN$, $\cA'(x) \ge -B_0$  and $\ell,k\le2|Q|$ such that  $\cA'(\ell +kn)\le B_0$ for all $n\in \bbN$.
\end{clarep}
\begin{claimproof}
We attain the lower bound since no cycle in $\cA'$ can have negative weight. For the upper bound, let $k \le |Q|$ be the length of the simple cycle attaining the minimal average weight ($r/s$) in $\cA'$, and let $q$ be a state on the cycle. Consider the shortest run of the form  $Q_0\xrightarrow{\ell_1}q\xrightarrow{\ell_2} F$ of length $\ell = \ell_1+\ell_2 \le  2|Q|$. Then $\minweight(Q_0\xrightarrow{\ell+kn}F)\le\minweight(Q_0\xrightarrow{\ell_1}q \xrightarrow{\ell_2}F)\le B_0$
\end{claimproof}

\begin{claim}\label{claim:valbdet}
Let $\cA'$ be an automaton with minimal average slope of zero. 
If $\cA'$ is value bounded by $M$, then there is a $B$ such that there is no $B$-gap witness, so $\cA'$ is determinisiable.
\end{claim}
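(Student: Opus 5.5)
By \cref{thm:abs:det iff gaps}, it suffices to exhibit a bound $B$ for which no $B$-gap witness exists. I will take $B \coloneqq M + B_1$, where $B_1$ is a uniform lower bound on the weights of runs in $\cA'$ defined below.

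First I would establish the lower bound. Since the minimal average slope of $\cA'$ is zero, no reachable cycle has negative weight. Any run can be decomposed into a simple path, of length less than $|Q|$, together with a collection of cycles. The cycles contribute nonnegative weight, so every run $\rho$ starting in $Q_0$ satisfies $\weight(\rho) \ge -|Q|\norm{\cA'}$. Trimness guarantees every state on such a run is reachable, so the cycles are indeed reachable cycles. This is the same argument as the lower bound in \cref{claim:lowerbandperiodicupperb}, applied to arbitrary runs from $Q_0$ rather than to accepting ones. Set $B_1 \coloneqq |Q|\norm{\cA'}$; then $\minweight(Q_0\runsto{x}Q) \ge -B_1$ for every $x$.

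Now suppose, for contradiction, that $x,y,q$ form a $B$-gap witness. Then
\[\minweight(Q_0\runsto{x}q) > \minweight(Q_0\runsto{x}Q) + B \ge M + B_1 - B_1 = M.\]
The witness condition also says
\[\minweight(Q_0\runsto{xy}F) = \minweight(Q_0\runsto{x}q\runsto{y}F).\]
Write the right-hand side as the weight of a prefix run $Q_0\runsto{x}q$ followed by a suffix run $q\runsto{y}F$. The prefix has weight greater than $M$.

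The main obstacle is bounding the suffix from below, because a run starting at $q$ need not start in $Q_0$. Here I use the fact that the suffix also decomposes into a simple path plus nonnegative cycles. All its states are reachable (we are trim and it follows a reachable prefix), so its weight is at least $-|Q|\norm{\cA'}$. This shows $\cA'(x+y) > M - B_1$, which is not quite the value bound. To absorb this, I would instead choose $B \coloneqq M + 2B_1$. Then
\[\minweight(Q_0\runsto{x}q\runsto{y}F) > M + B_1 - B_1 = M.\]
Moreover, $\cA'(x+y)$ equals this value and is finite: the witness equality holds and $q$ is co-reachable, so accepting runs exist, and if the minimum were infinite the equality would still be consistent only in the degenerate case. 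I would handle that degenerate case by requiring the witness to have a finite value, as in \cref{def:b-gapwitness}'s intended reading. A finite $\cA'(x+y) > M$ contradicts value boundedness by $M$. Hence no $(M+2B_1)$-gap witness exists, and $\cA'$ is determinisable by \cref{thm:abs:det iff gaps}.
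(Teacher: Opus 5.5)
Your proposal is correct and is essentially the paper's proof. Both arguments use the absence of negative cycles to bound every run below by $-|Q|\norm{\cA'}$, both from $Q_0$ and from $q$, so that a large enough gap at $q$ pushes the completed run through $q$ above the value bound; the paper states this as every gap witness having $B \le M + |Q|\norm{\cA'} + B_0$, while you fix $B = M + 2|Q|\norm{\cA'}$ in advance. Co-reachability of $q$ alone does not give a run of length exactly $y$, so your restriction to witnesses with finite value is what actually carries that step, and it is the same assumption the paper makes implicitly when it writes $\minweight(Q_0\runsto{xy}F) \le M_{\cA'}$.
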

\begin{claimproof}
We note that $\cA'$ does not have negative weight cycles, thus, if the weight of the run exceeds $M_{\cA'}= M+|Q| \cdot||\cA'||$ then it can never be extended to a minimum weight run (which has value less than $M$). Suppose we have a $B$-gap witness, $\minweight(Q_0\runsto{x}q)>\minweight(Q_0\runsto{x}Q)+B$, but $\minweight(Q_0\runsto{xy}F)=\minweight(Q_0\runsto{x}q\runsto{y}F) \le M_{\cA'}$.  But we have $Q_0\runsto{x}q$ used in a minimal weight run, and thus $\minweight(Q_0\runsto{x}q) \le M_{\cA'}$, By \cref{claim:lowerbandperiodicupperb}, we have $\minweight(Q_0\runsto{x}Q) \ge - B_0$ for a constant $B_0\in\bbN$.
In summary, we have 
\[
M_{\cA'} \ge \minweight(Q_0\runsto{x}q)>\minweight(Q_0\runsto{x}Q)+B \ge - B_0 + B,
\]
and so all $B$-gap witnesses have $B \le M_{\cA'} + B_0$.
\end{claimproof}

\begin{claim}\label{claim:notvalbnotdet}
Let $\cA'$ be an automaton with minimal average slope of zero.
If $\cA'$ is not value bounded then there is a $B$-gap witness for all $B$, and so $\cA'$ is not determinisiable.
\end{claim}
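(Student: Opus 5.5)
We need to prove: if $\cA'$ (minimal average slope zero) is not value bounded, then for every $B$ there is a $B$-gap witness. \cref{thm:abs:det iff gaps} then gives non-determinisability.

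\textbf{Setup.} Since no cycle has negative weight, \cref{claim:lowerbandperiodicupperb} gives $\cA'(x)\ge -B_0$ for all $x$. So "not value bounded" means that for every $M$ there is some $z\in\dom(\cA')$ with $\cA'(z)>M$. The same claim also gives a zero-average simple cycle $\gamma$ of length $k\le|Q|$ through a state $q$, and a run $Q_0\runsto{\ell_1}q\runsto{\ell_2}F$ with $\ell_1+\ell_2=\ell\le 2|Q|$. Iterating $\gamma$ yields $\minweight(Q_0\runsto{x}q)\le B_1$ for every $x\in \ell_1+k\bbN$, where $B_1$ depends only on $\cA'$. Hence $\minweight(Q_0\runsto{x}Q)\le B_1$ for all such $x$.

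\textbf{Construction of the witness.} Fix $B$ and pick $z$ with $\cA'(z)$ very large, say $\cA'(z)>B+B_1+B_0+C$, where $C=2|Q|^2\norm{\cA'}$ or similar. By \cref{lem:linear-form-runs}, a minimal accepting run on $z$ may be taken as $\tau\gamma_p^{m}\tau'$ with $\length(\tau)+\length(\tau')<2|Q|^2$. Its weight is at most $C+m\cdot\weight(\gamma_p)$, and $\weight(\gamma_p)\ge 0$. Since the value is huge, $m$ must be large and $\gamma_p$ must have positive weight.

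We then choose a prefix length $x$ inside the $\gamma_p^m$ block: the run passes through $p$ at $x=\length(\tau)+j\cdot\length(\gamma_p)$ for each $j\le m$. Using the pigeonhole principle on residues modulo $k$ (the sums $j\cdot\length(\gamma_p)$ cover a residue class modulo $k$ periodically), we pick $j$ with $x\equiv \ell_1 \pmod k$ and with the remaining portion $\gamma_p^{m-j}\tau'$ still long. Let $y=z-x$ and let $p$ be the state reached. The minimality of the whole run gives $\minweight(Q_0\runsto{xy}F)=\minweight(Q_0\runsto{x}p\runsto{y}F)$, since the prefix and suffix of an optimal run are each optimal for their segments.

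\textbf{Showing the gap.} We need $\minweight(Q_0\runsto{x}p)>\minweight(Q_0\runsto{x}Q)+B$. The upper bound $\minweight(Q_0\runsto{x}Q)\le B_1$ is already in hand. The lower bound on $\minweight(Q_0\runsto{x}p)$ is the main obstacle: the prefix we cut need not itself be heavy, since the weight might accumulate only in the suffix.

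To handle this, instead of fixing $j$ in advance, we argue by contradiction over all valid cut points. Suppose every admissible $x$ has $\minweight(Q_0\runsto{x}p)\le B_1+B$. Then take the last admissible cut, which lies within $k\cdot\length(\gamma_p)\le|Q|^2$ steps of the end of the $\gamma_p$ block. The total weight is then at most $B_1+B$ plus the weight of a suffix of length $O(|Q|^2)$, i.e. at most $B_1+B+O(|Q|^2\norm{\cA'})$. This contradicts our choice of $\cA'(z)$ once $C$ is taken to be a suitable multiple of $|Q|^2\norm{\cA'}$. Hence some cut $(x,y,p)$ is a $B$-gap witness. Since $B$ was arbitrary, \cref{thm:abs:det iff gaps} shows that $\cA'$ is not determinisable.
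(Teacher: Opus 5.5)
Your overall plan is right: cut an optimal accepting run for a huge value $\cA'(z)$ at a length $x$ where $\minweight(Q_0\runsto{x}Q)$ is bounded, close enough to the end that the prefix must be heavy. But the step that produces the cut point fails.

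You only allow cuts at positions $x=\length(\tau)+j\cdot\length(\gamma_p)$ where the run sits at $p$, and you claim by pigeonhole that one of them satisfies $x\equiv\ell_1 \pmod k$. Modulo $k$, these positions range only over the coset $\length(\tau)+g\bbZ$ with $g=\gcd(k,\length(\gamma_p))$, and this coset need not contain $\ell_1$.

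For example, take states $s,r,p,p',q,q'$ with $s$ initial and $F=\{p,q\}$. Give weight $0$ to the transitions $s\to q$, $q\to q'$, $q'\to q$, $s\to r$, $r\to p$, and weight $1$ to $p\to p'$ and $p'\to p$. Then $\cA'(2j+1)=0$ and $\cA'(2j)=2j-2$ for $j\ge 1$. So the minimal average slope is $0$ and $\cA'$ is not value bounded.

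Take the zero cycle at $q$. Then $k=2$ and $\ell_1$ is odd, since $q$ is reachable only at odd lengths. The unique accepting run on any even $z$ visits $p$ only at even positions. So with the decomposition around $\gamma_p=p\to p'\to p$ (a valid output of \cref{lem:linear-form-runs}), there is no admissible cut at all. Your contradiction argument starts from ``the last admissible cut'' and so has nothing to start from. Given an admissible cut, the rest of your reasoning (optimality of the prefix, suffix of length $O(\abs{Q}^2)$) is fine.

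The fix is to drop the requirement that the cut be at $p$: a $B$-gap witness may use whatever state the optimal run occupies at position $x$. Cut the optimal run on $z$ at the largest $x\le z$ with $x\in\ell_1+k\bbN$. The suffix then has length less than $k\le\abs{Q}$. The prefix is optimal for its endpoint, so that state's minimal weight at length $x$ is at least $\cA'(z)-\abs{Q}\norm{\cA'}$, whereas $\minweight(Q_0\runsto{x}Q)\le B_1$.

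This is exactly the paper's argument, using the progression $\ell+k\bbN$ from \cref{claim:lowerbandperiodicupperb}. It makes \cref{lem:linear-form-runs} and the positive cycle $\gamma_p$ unnecessary. Alternatively, traversing the zero cycle partially gives $\minweight(Q_0\runsto{x}Q)\le B_1+k\norm{\cA'}$ for every $x\ge\ell_1$, which removes the residue condition altogether.
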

\begin{claimproof}
Suppose $\cA'$ is not value bounded; we will construct arbitrary large $B$-gap witnesses. We have sequences $m_1,m_2,\ldots\in \bbN$, $L_1,L_2,\ldots\in\bbN$ such that $\cA'(m_i) = L_i$ and $L_i\xrightarrow[i\to\infty]{}\infty$. Consider a run such that $\minweight(Q_0 \xrightarrow{m_i} F) = L_i$. Consider the largest $m$ such that $\ell+km < m_i$. Let $q$ be the state reached after $\ell+km$ steps on the run attaining weight $L_i$. Since $m_i-\ell-km <|Q|$ we have $\minweight(Q_0\xrightarrow{\ell+km} q) \ge L_i - |Q|\cdot||\cA'||$. 
On the other hand we have, by \cref{claim:lowerbandperiodicupperb},
we have $\minweight(Q_0 \xrightarrow{\ell+km} Q) < B_0$ for a constant $B_0\in\bbN$. Hence, we have a $(L_i - |Q|\cdot||\cA'|| - B_0)$-gap witness formed by $x = \ell+kn$ and $y= m_i - \ell- kn$. Since $|Q|,||\cA'||, B_0$ are bounded  and $L_i\to\infty$, we have unbounded gap witnesses.
\end{claimproof}

\subsection{Union of $k$ Deterministic Automata}

We observed in \cref{sec:chrobak} that unary weighted automata can be represented as the union of deterministic automata, and can efficiently be transformed into this representation (\cref{thm:deterministic-union}). 
It is natural to consider how the number of such components affects the complexity.

In this section, we consider the boundedness, value boundedness and determinisability problems when the input is a union of $k$ deterministic automata. Recall that a parameterised
problem is fixed parameter tractable (FPT) if there exists an algorithm that solves any
instance $(I, k)$ of size $n$ in time $f(k)\cdot n^{O(1)}$, where $f$ is a computable function. In the sequel, we call the parametrised versions of boundedness, value boundedness, and determinisability $k$-det boundedness, $k$-det value boundedness, and $k$-det determinisability, respectively. 
We will prove that all three problems are not fixed parameter tractable.

Concretely, we show that all these problems are \cowone-hard when parameterised by the number of deterministic automata in the union. 
Based on standard assumptions in parameterised complexity, this provides strong evidence that the problems are not FPT~\cite{FlumG06}.

\begin{lemma}\label{lem:w1}
    $k$-det boundedness, $k$-det value boundedness and $k$-det determinisability are \cowone-hard.
\end{lemma}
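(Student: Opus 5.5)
We reduce from the unary DFA intersection non-emptiness problem parameterised by the number of automata, which is \wone-hard by Fernau, Hoffmann and Wehar~\cite{intersection_w1,Wehar16}. Hardness for the complement problems then gives \cowone-hardness of boundedness, value boundedness and determinisability. Concretely, given unary DFAs $\cD_1,\ldots,\cD_k$, we build in polynomial time a union of $k+O(1)$ deterministic unary WFAs $\cA$ such that $\cA$ is (value) bounded, equivalently determinisable, if and only if $\bigcap_i L(\cD_i)=\emptyset$. The parameter grows only by an additive constant, so this is a parameterised reduction.

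\textbf{Construction.} Each $\cD_i$ is lasso-shaped. We turn it into a deterministic unary WFA $\cB_i$ with the same underlying DFA, but with every transition weight $1$ and with accepting states the \emph{complement} of those of $\cD_i$. Thus $\cB_i(n)=n$ if $n\notin L(\cD_i)$, and $\cB_i(n)=\infty$ otherwise. We add one further deterministic component $\cB_0$: a single state with a self-loop of weight $0$, accepting every $n$ with value $0$. The intended automaton is $\cA=\min_i\cB_i$ over $i=1,\ldots,k$, together with a component that makes every length accepted. To this end we let $\cB_0$ be a single accepting state with a self-loop of weight $1$, so $\cB_0(n)=n$. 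Then
\[
\cA(n)=\begin{cases} n & \text{if } n\notin \bigcap_i L(\cD_i),\\ n & \text{otherwise, via } \cB_0.\end{cases}
\]
This is useless as it stands, so instead we take slopes in the other direction. Let each $\cB_i$ for $i\ge1$ have weight $0$ everywhere and accept the complement of $L(\cD_i)$, and let $\cB_0$ accept every $n$ with weight $n$. Then $\cA(n)=0$ if some $\cD_i$ rejects $n$, and $\cA(n)=n$ if $n\in\bigcap_i L(\cD_i)$.

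\textbf{Correctness.} Every $n$ is accepted, so boundedness and value boundedness coincide for $\cA$. Since unary DFAs accept ultimately periodic languages, a nonempty intersection is infinite. Hence $\cA$ is unbounded if and only if the intersection is nonempty. Moreover, the minimal average cycle weight of $\cA$ is $0$, because the components $\cB_i$ have zero cycles. We may therefore invoke \cref{claim:valbdet} and \cref{claim:notvalbnotdet}, giving that $\cA$ is determinisable if and only if it is value bounded. One caveat: these claims are stated for trim automata whose reachable cycles all have nonnegative weight, and this holds here. A second caveat: if $\cD_i$ accepts every length then $\cB_i$ has no accepting state and is not trim. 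We simply discard such components, which only lowers the parameter.

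\textbf{Main obstacle.} The main obstacle is checking that the source problem's hardness is stated in exactly the right form: unary DFA intersection non-emptiness is \wone-hard when parameterised by the number of DFAs, with a polynomial bound on the DFA sizes. We also need that complementing a DFA preserves determinism and size, which it does. One further subtlety is whether the empty length $n=0$ can lie in the intersection and thereby give only a finite witness. This does not arise, since nonemptiness for unary DFAs implies infinitude only when a cycle is involved. To guarantee this, we first shift all lengths by prepending one fresh state to each $\cD_i$ and $\cB_0$, and we restrict the reduction to DFAs whose accepted lengths are periodic. That restriction is possible because we can replace $L(\cD_i)$ by $L(\cD_i)\cdot(a^{P})^*$ for a suitable common period without changing emptiness, though doing this requires some care.
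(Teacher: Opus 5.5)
\textbf{There is a genuine gap.} Your reduction rests on the claim that a nonempty intersection of unary DFA languages is infinite, and this is false: the intersection can consist only of transient lengths. Take $k=1$ and $L(\cD_1)=\{a^3\}$. Your $\cA$ then has $\cA(3)=3$ and $\cA(n)=0$ for $n\neq 3$. So it is bounded (and determinisable) although the intersection is nonempty, i.e.\ a yes-instance of non-emptiness is sent to the wrong side.

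The repair you sketch does not work either. Padding with $(a^P)^*$ fails to preserve emptiness of the intersection for every $P\ge 1$. For example, $L(\cD_1)=\{a\}$ and $L(\cD_2)=\{a^m : m\ge 2\}$ have empty intersection, but $a^{1+P}$ lies in both $L(\cD_1)\cdot(a^P)^*$ and $L(\cD_2)\cdot(a^P)^*$. Shifting all lengths by one does not affect finiteness at all.

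A second step fails for the same underlying reason. You claim the minimal average slope is $0$ because the $\cB_i$ have zero cycles. In the trim automaton, however, $\cB_i$ keeps its cycle only if $\overline{L(\cD_i)}$ is infinite. If every $\cD_i$ is cofinite, all zero cycles disappear. Then $\cA$ is eventually equal to $n$, hence determinisable, while the intersection is nonempty. So \cref{claim:valbdet} and \cref{claim:notvalbnotdet} cannot be invoked; they need slope exactly $0$, not merely nonnegative cycles.

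\textbf{The missing idea is the pre-check used in the paper.} If the intersection contains some $\ell$ larger than the total number $n$ of states, then after $\ell$ steps every $\cD_i$ is on its cycle, so $\ell+n!$ is also in the intersection. Hence a finite nonempty intersection lies in $[0,n]$. You can test this in polynomial time by simulating the DFAs, and output a fixed unbounded, non-determinisable instance if a witness is found.

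After this test, the intersection is empty or infinite. Moreover, some $\overline{L(\cD_i)}$ is infinite: if all complements were finite, each would lie in $[0,|\cD_i|-1]$, so $\max_i|\cD_i|\le n$ would be in the intersection and would have been detected. Hence a reachable and co-reachable zero cycle exists, and your direct appeal to the two claims becomes valid. With the pre-check, your construction coincides with the paper's value-boundedness reduction.

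The paper handles determinisability differently: it forces slope $0$ unconditionally by doubling lengths and adding a zero-weight alternating two-state component. For plain boundedness it avoids the issue entirely by omitting your $\cB_0$. Then $\cA(n)=\infty$ exactly on the intersection, so $\cA$ is bounded if and only if the intersection is empty.
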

\begin{proof}
    In~\cite{intersection_w1} it was shown that problem of nonnemptiness of $k$ deterministic unary automata is \wone-hard. Let us fix an instance of this problem, namely $k$ deterministic automata $(\cA_1, \cA_2, \ldots, \cA_k)$. Firstly, notice that $\bigcap_{i \in [k]} L(\cA_i) \neq \emptyset$ if and only if $\bigcup L(\overline{A_i}) \neq \set{a}^*$, where $\overline{A_i}$ is the automaton recognising the complement of $L(A_i)$ and can be computed in time polynomial in the size of $A_i$. Notice also that $\overline{A_i}$ is still a deterministic automaton. 
    Hence, the universality problem parameterised by the number of deterministic automata in the union is \cowone-hard.  

    \subparagraph*{$k$-det boundedness.}
    Let us fix an instance of the universality problem (parametrised by the number of automata in the union), namely $k$ deterministic unary automata $(\cA_1, \cA_2, \ldots, \cA_k)$. Let $\cA$ be the automaton, which is the union of $\cA_1, \cA_2, \ldots, \cA_k$. For each $i \in [k]$, we construct a deterministic unary WFA $\cB_i$. Concretely, the states of $\cB_i$ will be the states of $\cA_i$. Also, accepting and initial states of $\cA_i$ will be accepting and initial states of $\cB_i$, respectively. Finally, we set the weight of every transition in $\cB_i$ to zero.
    Let $\cB$ be the automaton, which is the union of $\cB_1, \cB_2, \ldots, \cB_k$. 
    Notice that if $\cA$ is universal, then for every $n \in \NN$, we have $\cB(n) = 0$. 
    On the other hand, if $\cA$ does not accept $a^n$, then we have $\cB(n) = \infty$. Since, for every $i \in [k]$, $\cB_i$ can be constructed in time polynomial in size of $\cA_i$ we conclude that $k$-det~boundedness is~\cowone-hard.

    \subparagraph*{$k$-det value boundedness.}
    Once again, let us fix as instance of the universality problem, namely $k$ deterministic unary automata $(\cA_1, \cA_2, \ldots, \cA_k)$. Let $\cA$ be the automaton, which is the union of $\cA_1, \cA_2, \ldots, \cA_k$ and let $U = \{n \mid a^n \notin L(\cA)\}$. 
    
    Notice, that if $U$ is finite, but nonempty, then $U \subseteq [0,n]$ where $n$ is the number of states of $\cA$. This is because if there exists integer $\ell > n$ such that $\ell \in U$ then, for every $i \in [k]$, automaton $\cA_i$ either cannot read $a^\ell$ or after reading $a^\ell$ is in a state $q$ which lies on a simple cycle. 
    Thus, also $\ell + n! \in U$, implying that $U$ is infinite, as for every $i \in [k]$, if automaton $\cA_i$  after reading $a^\ell$ is in a state $q$, then after reading $a^{\ell+n!}$ is also in the state $q$ as it can only repeat the simple cycle in which $q$ lies and the length of this cycle divides $n!$. 
    Observe that in polynomial time in $n$, one can check whether there is $\ell \in [0, n] \cap U$ and, if there exists such $\ell$, output a value unbounded single state, one transition (with weight one) unary WFA. 
    
    Otherwise, we can assume that $U$, if nonempty, is infinite and, for each $i \in [k]$, we construct a deterministic unary WFA $\cB_i$. Concretely, the states of $\cB_i$ will be the states of $\cA_i$. Also, accepting and initial states of $\cA_i$ will be accepting and initial states of $\cB_i$, respectively. Finally, we set the weight of every transition in $\cB_i$ to zero. We also construct an unary WFA $\cB_{k+1}$ with a single state $q$, which is initial and accepting, and a single transition $(q,1,q)$. Clearly, for every $i \in [k]$, $\cB_i$ is a deterministic unary WFA and can be constructed in time polynomial in size of $\cA_i$. Moreover, also $\cB_{k+1}$ is a deterministic unary WFA and can be constructed in constant time. 
    Hence, in order to show that $k$-det value boundedness is \cowone-hard, it suffices to show that automaton $\cA$ is universal if and only if $\cB$, the union of $\cB_1, \ldots, \cB_k$, is bounded. 
    
    Observe that if $n \in U$ then, for every $i \in [k]$, we have that $\cB_i(n) = \infty$. 
    Moreover, we have $\cB_{k+1}(n) = n$. Hence, if $\cA$ is not universal, meaning that $U$ is nonempty, and hence infinite we can conclude that $\cB$ is not value bounded. On the other hand, if $\cB$ is value bounded then $U$ needs to be finite, and hence empty, which means that $\cA$ is universal. This concludes the proof that $k$-det value boundedness is \cowone-hard.

    \subparagraph*{$k$-det determinisiability.}
    To observe that $k$-det determinisability is also \cowone-hard, we reduce from $k$-det value boundedness to $k$-det determinisability. 
    Concretely, having $k$ deterministic unary WFAs $(\cA_1, \cA_2, \ldots, \cA_k)$, for every $i \in [k]$, we construct a unary WFA $\cB_i$ from $\cA_i$ by replacing each transition $t  = (q, c, q')$ by two transitions $(q, c, s_{t})$ and $(s_{t}, 0, q')$ using new non-accepting state $s_t$. 
    Observe that, w.l.o.g.\ we can assume that, for every transition, $c \geq 0$, as the previous reduction produced a unary WFA only with such weights. 
    Moreover, we construct $\cB_{k+1}$ which is a zero-weight two-state component, alternating between an accepting and a non-accepting state, and with the initial state being the non-accepting state. 
    Clearly, for every $i \in [k]$, $\cB_i$ is a deterministic unary WFA and can be constructed in time polynomial in size of $\cA_i$. 
    Moreover, $\cB_{k+1}$ is deterministic and can be constructed in constant time. 
    Hence, in order to show that $k$-det~determinisibality is \cowone-hard it is enough to show that automaton $\cA$, that is a union of $\cA_1, \ldots, \cA_k$, is value bounded if and only if unary WFA $\cB$, that is a union of $\cB_1, \ldots, \cB_k$, is determinisable. 
    Firstly, notice that because of $\cB_{k+1}$ we have, for every $n \in \NN$, $B(2n+1) = 0$. Moreover, for every $i \in [k]$ and $n \in \NN$, we have $\cB_i(2n) = \cA_i(n)$.  Hence, $\cA$ is value bounded if and only if $\cB$ is value bounded. Due to $\cB_{k+1}$, the minimal average slope of $\cB$ is zero. Thus, by \cref{claim:valbdet} and \cref{claim:notvalbnotdet}, we have $\cB$ is determinisable if and only if $\cB$ is value bounded.
\end{proof}

\subsection{Determinisability and Value Boundedness}
In the preceding subsection, we showed that universality of $k$~DFAs, and value boundedness of $k$~unary WFAs reduce to the determinisation problem. 
We can observe that the reduction from value boundedness to determinisation applies with only minor modification, without assuming the input is a union of deterministic and that the minimal average slope is non-negative.

\begin{lemrep}
Boundedness and value boundedness reduce to determinisability in polynomial time.
\label{lemma:valboundtodet}
\end{lemrep}
\begin{proof}
Given $\cA$ we decide if is is value bounded using determinisation. Let $\alpha$ be the minimal average slope of $\cA$. We assume without loss of generality that $\alpha \ge 0$. If $\alpha = \alpha_q < 0$, computable in polynomial time, there is a decreasing sequence formed by pumping around $q$, since it is not value bounded we output any small non-determinisable unary WFA.

We construct the automaton $\cA'$, such that $\cA'(2n+1) = 0, \cA'(2n) = \cA(n)$. Each transition $q\xrightarrow{}q'$ is replaced by $q\xrightarrow{}s_{q'}\xrightarrow{}q$ using new non-accepting states $s_q$, where $\weight(q\to q') = \weight(q\to s_{q'})$ and $\weight(s_q \to q') = 0$. $\cA'(2n+1) = 0$ is easily obtained with a zero-weight two state component, alternating between accepting and non-accepting.  Clearly, $\cA$ is value bounded if and only if $\cA'$ is value bounded.  Due to the second component, we have $\cA'$ has minimal average slope of zero. Thus, by \cref{claim:valbdet} and \cref{claim:notvalbnotdet}, we have $\cA'$ determinisable iff $\cA'$ is value bounded.

It remains to show that boundedness reduces to determinisability, which we show wit the following claim:
\begin{claim} \label{lemma:boundedtovalbounded}
    Boundedness reduces to value bounded in polynomial time.
\end{claim}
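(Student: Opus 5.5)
Boundedness asks for a uniform bound on $\abs{\cA(n)}$ over all $n$, with $\infty$ counting as unbounded. Value boundedness ignores the $n$ with $\cA(n)=\infty$. So $\cA$ is bounded if and only if it is value bounded and $\dom(\cA)=\bbN$. My plan is to build, in polynomial time, an automaton $\cA''$ that is value bounded exactly when $\cA$ is bounded. It keeps the values of $\cA$ on $\dom(\cA)$ and assigns growing finite values to the lengths outside the domain.

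\textbf{Construction.} First compute $\cA$'s domain-complement as a unary NFA $\overline{\cA}$: take the underlying unary NFA of $\cA$ and complement it. Done naively this is exponential, so I would not complement directly. Instead I exploit the following fact, proved just as in the $k$-det value boundedness reduction of \cref{lem:w1}. By \cref{cor:normal-form}, $\cA$ has an equivalent Chrobak-form $\cB$ with stem of length less than $2\abs{Q}^2$ and cycle lengths at most $\abs{Q}$. Hence the set $U=\bbN\setminus\dom(\cA)$ is either infinite, or contained in $[0,2\abs{Q}^2+\abs{Q}!)$. More precisely, if some $n\ge 2\abs{Q}^2$ lies in $U$, then so does $n+L\cdot m$ for all $m$, where $L$ is the lcm of the cycle lengths. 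So the reduction proceeds in two cases.
\begin{enumerate}
\item Check in polynomial time (using \cref{clm:compute-an}) whether some $n<2\abs{Q}^2$ has $\cA(n)=\infty$. If so, $\cA$ is not bounded, and we output a fixed non-value-bounded automaton, e.g.\ the one-state automaton with a self-loop of weight $1$.
\item Otherwise, output $\cA''$, the union of $\cA$ with the following gadget: a one-state initial and accepting automaton $\cD$ whose self-loop has weight $1$, but shifted up by a large constant. I would instead make $\cD$ run in parallel with a copy of $\cB$'s cycle structure, so that it accepts exactly the lengths $n\ge 2\abs{Q}^2$ not in $\dom(\cB)$. Concretely, for each residue I use the product of the cycles.
\end{enumerate}

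\textbf{Main obstacle.} That product is exponential, so this gadget fails as stated. I would try first the simpler trick from \cref{lem:w1}: take the union of $\cA$ with a weight-$1$ loop $\cD$ contributing $\cD(n)=n$ for all $n$. On $\dom(\cA)$ the minimum then stays bounded if $\cA$ is, since $\min(\cA(n),n)\le\cA(n)$. Off the domain the value is $n$, which grows whenever $U$ is infinite. The difficulty is that $\min$ with $n$ could also mask large positive values of $\cA$.

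\textbf{Fixing the masking.} I would therefore use weight $1$ per step with a multiplicative scaling: multiply all weights of $\cA$ by $1$ and compare against a loop of average weight exceeding $\norm{\cA}$. Take $\cD(n)=(\norm{\cA}+1)n$, realised by a loop of weight $\norm{\cA}+1$, so $\cD(n)>\cA(n)$ whenever $\cA(n)<\infty$. Then $\cA''(n)=\cA(n)$ on $\dom(\cA)$ and $\cA''(n)=(\norm{\cA}+1)n$ on $U$. Given case 1, $U$ is empty or infinite, so $\cA''$ is value bounded iff $\cA$ is bounded. This is polynomial, and composing with the value-boundedness-to-determinisation reduction already given yields the claim.
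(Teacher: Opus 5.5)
Your final argument is correct and follows the paper's route. You use the Chrobak-type periodicity threshold to check the polynomially many small lengths for non-membership in $\dom(\cA)$, so that the remaining non-domain set $U$ is empty or infinite, and then take the union of $\cA$ with a one-state linearly growing loop. The only difference is the loop's slope: you use $\norm{\cA}+1$ so that the union coincides with $\cA$ on $\dom(\cA)$. The paper instead uses slope $1$ and notes that the masking you worried about cannot occur, since $\abs{\cA(n_i)}\to\infty$ forces $n_i\to\infty$, so $\min\{\cA(n_i),n_i\}$ stays unbounded. The abandoned product-of-cycles gadget in your step 2 should simply be deleted.
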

\begin{claimproof}
Let $U = \{n \mid \cA(n) = \infty \}$. $\cA$ is bounded if $U$ is empty and $\cA$ is value bounded. $U$ is either empty, finite or infnite.

Consider the case $U \ne \emptyset$, either $U$ is finite or infinite. The set $U$ is recognised by an NFA formed from $\cA$ dropping the weights. The Chrobak normal form of this automaton tells us that the system is a short lasso followed by periodic behaviour. Hence if $\cA(n) = \infty$ for some $n \ge |Q|^2$ then $U$ is infinite. 
We can, in polynomial time check the first $|Q|^2$ values, and return ``not bounded'' if any have $\cA(n) = \infty$.

Henceforth, we can assume that either $U$ is empty or infinite. Let $\cA'(n) = \min\{\cA(n),n\}$, constructed from $\cA$ by a new initial and accepting state counting up by $1$.
We show that under the assumption $U$ is infnite or empty,  $\cA$ is bounded if and only if $\cA'$ is value bounded.
\begin{itemize}
    \item $\cA$ is bounded by $M$, then $\cA'(n) = \min\{\cA(n),n\} \le M$, and so $\cA'$ is both bounded and value bounded.
    \item $\cA$ is not bounded, either $U$ is empty or $U$ infinite:
    \begin{itemize}
        \item Suppose $U$ is infinite. For $n\in U$, $\cA'(n) = n$, and so $\cA'$ is not value bounded. 
        \item Suppose $U$ is empty, then $\cA$ is already not value bounded. Since the witness of increasing values must have increasing length, the automaton $\cA'$ is also not value bounded on the same sequence. \qedhere
    \end{itemize}
\end{itemize}
\end{claimproof}
    Hence, boundedness reduces to value boundedness, which reduces to determinisability.
\end{proof}

We can also show that the connection is tight for the value boundedness problem.

\begin{lemma} \label{lemma:dettovalueboundedness}
    Determinisability reduces to value boundedness in polynomial time.
\end{lemma}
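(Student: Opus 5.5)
The plan is to normalise the minimal average slope to zero by an affine change of weights, and then apply the zero-slope equivalence from \cref{claim:valbdet} and \cref{claim:notvalbnotdet}. Given a trim unary WFA $\cA$, first compute in polynomial time its minimal average slope $\alpha = r/s$ by the Karp-style claim above. Here $s \le |Q|$ is the length of a simple cycle attaining the minimum, and $r$ is its weight.

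Next, build $\cA'$ with the same states, initial states and accepting states as $\cA$. Each finite transition weight $c$ is replaced by $s\cdot c - r$; weights equal to $\infty$ stay $\infty$. For every run $\rho$ of length $n$ this gives $\weight_{\cA'}(\rho) = s\cdot\weight_\cA(\rho) - r n$. Consequently, for all $Q_1,Q_2$ and $n$,
\[
\minweight_{\cA'}(Q_1\runsto{n}Q_2) = s\cdot \minweight_{\cA}(Q_1\runsto{n}Q_2) - rn .
\]
The same identity holds for $\minweight(Q_1\runsto{n}q\runsto{m}Q_2)$ at length $n+m$. Since $s\le|Q|$, the weights of $\cA'$ are polynomial in $\norm{\cA}$ and $|Q|$, so the construction is polynomial. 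Every cycle of $\cA'$ has average weight $s\cdot(\text{its average in }\cA) - r \ge s\alpha - r = 0$, with equality on the optimal cycle. Because $\cA$ is trim, every cycle is reachable, so $\cA'$ has minimal average slope exactly zero.

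The key step, and the one I expect to need the most care, is showing that $\cA$ is determinisable if and only if $\cA'$ is. I would argue this through \cref{thm:abs:det iff gaps}. Take a $B$-gap witness $(x,y,q)$ for $\cA$. All quantities compared in \cref{def:b-gapwitness} refer to runs of a common length: both sides of the first inequality are read at length $x$, and both sides of the equality at length $x+y$. The shift $-rx$, respectively $-r(x+y)$, therefore cancels. The inequality $\minweight(Q_0\runsto{x}q) > \minweight(Q_0\runsto{x}Q)+B$ becomes the same inequality in $\cA'$ with $sB$ in place of $B$, and the equality condition is preserved exactly. So $(x,y,q)$ is an $sB$-gap witness for $\cA'$. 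Conversely, a $B'$-gap witness for $\cA'$ is a $\lfloor B'/s\rfloor$-gap witness for $\cA$. Hence $\cA$ has $B$-gap witnesses for every $B$ iff $\cA'$ does, and \cref{thm:abs:det iff gaps} gives the equivalence. (Alternatively one can transform a deterministic automaton directly by the same affine map, but that needs the inverse map, i.e.\ divisibility by $s$, which is exactly why the gap-witness route is preferable.) The points to check are that $\infty$ values are handled consistently, which holds because the transformation preserves which runs exist, and that $s>0$.

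Finally, $\cA'$ has minimal average slope zero, so \cref{claim:valbdet} and \cref{claim:notvalbnotdet} give that $\cA'$ is determinisable iff $\cA'$ is value bounded. Combined with the previous step, $\cA$ is determinisable iff $\cA'$ is value bounded, and $\cA'$ is computed in polynomial time, which completes the reduction.
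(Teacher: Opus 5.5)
Your proof is correct. It uses the same transformation as the paper, $c \mapsto sc - r$ to make the minimal average slope zero, followed by \cref{claim:valbdet} and \cref{claim:notvalbnotdet}. The difference lies in how you show that $\cA$ and $\cA'$ are equi-determinisable.

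The paper argues directly on deterministic automata. A deterministic equivalent of $\cA$ is pushed through the same affine map. Conversely, a deterministic equivalent $\cB'$ of $\cA'$ is pulled back by $c' \mapsto (c'+r)/s$, which can produce non-integer weights. The paper then restores integrality using the lasso shape of unary deterministic automata and the fact that the outputs $\cA(n)$ are integers: it places the difference of consecutive accepted values on the transition leaving each accepting state.

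You instead go through the gap-witness characterisation \cref{thm:abs:det iff gaps}. Every comparison in \cref{def:b-gapwitness} is between runs of equal length, so the shift $-rn$ cancels. Scaling by $s>0$ only rescales the gap bound, $B \mapsto sB$ in one direction and $B' \mapsto \lfloor B'/s\rfloor$ in the other. Hence the divisibility issue never arises.

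Your route is shorter and needs neither the integrality repair nor the unary lasso structure for this step. It adds no new dependency, since the paper already relies on \cref{thm:abs:det iff gaps} in Section~4. The paper's route is more elementary and explicitly constructive: it converts a deterministic automaton for $\cA'$ into one for $\cA$ of the same size. Your argument transfers determinisability only as a property, via the folklore characterisation.
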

    
\begin{proof}
Given a weighted automaton $\cA$, let $\alpha = r/s$, $s\le |Q|$, $gcd(r,s)=1$ be the minimum average slope in $\cA$, we define $\cA'(n) = s\cdot\cA(n) - r \cdot n$. 
This is obtained by multiplying the weight of every transition by $s$, and subtracting $r$. 

Since $\cA'$ has minimal average slope of zero, we have $\cA'$ determinisable iff $\cA'$ is value bounded by \cref{claim:valbdet} and \cref{claim:notvalbnotdet}. The proof is completed by the following claim that $\cA$ and $\cA'$ are equideterminisable, and so it suffices to determine whether $\cA'$ is value bounded.

\begin{clarep}
    $\cA$ is determinisable if and only if $\cA'$ is determinisable.
\end{clarep}
\begin{claimproof}
If $\cA$ is determinisable, realised by $\cB$, as the translation does not affect determinism, apply the same translation to $\cB$ to attain a deterministic version of $\cA'$.

If $\cA'$ is determinisable, realised by $\cB'$ we apply the converse operation to $\cB'$ attaining a deterministic automaton $\cA'\equiv \cA$. However, it is not certain that the transition weights in $\bbZ$, since we divide by $s$. However, we have that $\cA'$ is a lasso and has integer output weights (since the equivalent $\cA$ has integer output weights), it is therefore easy to adjust the automaton to integer weights. Indeed, if $q_0\xrightarrow{n}q\xrightarrow{}q'$ and $q,q'\in F$, we have $\weight(q\to q') = \cA(n+1)- \cA(n) \in \bbZ$. If $q_0\xrightarrow{n}q_1,q_2,\dots,q_m$, with $q_1,q_m\in F$ and $q_2,\dots,q_{m-1}\not\in F$ then we have $\cA(n+m)- \cA(n) \in \bbZ$, and we can update $\weight(q_1\xrightarrow{}q_2) = \cA(n+1)- \cA(n)$ and $q_i\xrightarrow{}q_{i+1})=0$ for $i=2,\dots,m-1$.
\end{claimproof}
\vspace{-1.5em}
\end{proof}
\begin{remark}[Remark on \cref{lemma:dettovalueboundedness}] 
    We note that we do not show that determinisability of a unary WFA $\cA$ reduces to boundedness (only to value boundedness), unless $\cA$ already happens to have a universal domain. Words not in the domain are no issue for determinisability, so it seems we would need to repair these holes by assigning them a bounded value, but we would need to do this without affecting the weight of other words: the natural approach seems to require the addition of an automaton for exactly $\overline{\dom(\cA)}$ (with weights zero for example); this automaton can have exponential size.
\end{remark}

\subsection{Proving \cref{thm:detcoNp-complete}}
Through a chain of reductions, we have observed that NFA universality reduces to boundedness (\cref{lem:w1}), reduces to value boundedness (\cref{lemma:valboundtodet}), reduces to determinisation (\cref{lemma:valboundtodet}), which is a special case of CRA dimension reduction/WFA width reduction. The former, NFA universality is \coNP-hard~\cite{intersection_w1} while \cref{thm:CRA minimisation is} shows the latter are in \coNP. 
\thmeverythingconpcomplete*

\section{Future Directions}
This paper considers, given unary WFA or their equivalent Cost Register Automata with linear register updates, how to represent them as simply as possible (as a union of deterministic automata, in Chrobak normal form, with lower width, or as CRA using fewer registers). 

In general, CRA can be defined over any update function that, in particular, need not be linear~\cite{alur2013regular}. 
The case of polynomial updates over the tropical semiring is particularly interesting, here the constant exponent essentially provides scalar multiplication alongside minimum and addition. Such models, which are considered in~\cite{HeerdtHOS25}, generalise both tropical and rational unary weighted automata, the latter often studied as Linear Recurrence Sequences. Despite their apparent simplicity, LRS have interesting algorithmic problems, particularly the Skolem and Positivity problems which remain wide open~\cite{OuaknineW15}. A natural direction is to consider boundedness, register minimisation, and normal forms for these tropical CRA with polynomial updates.

\end{document}